\documentclass[letterpaper, 10 pt, conference]{ieeeconf}
\IEEEoverridecommandlockouts    
\overrideIEEEmargins                                      % Needed to meet printer requirements.

%In case you encounter the following error:
%Error 1010 The PDF file may be corrupt (unable to open PDF file) OR
%Error 1000 An error occurred while parsing a contents stream. Unable to analyze the PDF file.
%This is a known problem with pdfLaTeX conversion filter. The file cannot be opened with acrobat reader
%Please use one of the alternatives below to circumvent this error by uncommenting one or the other
%\pdfobjcompresslevel=0
%\pdfminorversion=4

% See the \addtolength command later in the file to balance the column lengths
% on the last page of the document

\usepackage{graphics} % for pdf, bitmapped graphics files
\usepackage{epsfig} % for postscript graphics files
\usepackage{times} % assumes new font selection scheme installed
\usepackage{amsmath} % assumes amsmath package installed
\usepackage{amssymb}  % assumes amsmath package installed

\usepackage{subfigure}
\usepackage{mathtools}
\usepackage{color}
\usepackage{cite}
\usepackage{comment}
\usepackage{balance}
\usepackage[table]{xcolor}
\usepackage{multirow}
\usepackage{hyperref}
\hypersetup{
    colorlinks=true,
    linkcolor=black,
    citecolor=green,
    urlcolor=blue,
}
\usepackage{blindtext}
\usepackage{bm}
\usepackage{algorithm}
\usepackage{algpseudocode}
\usepackage[official]{eurosym}
\usepackage{listings}
\usepackage{xcolor}
\definecolor{codegreen}{rgb}{0,0.6,0}
\definecolor{codegray}{rgb}{0.5,0.5,0.5}
\definecolor{codepurple}{rgb}{0.58,0,0.82}
\definecolor{backcolour}{rgb}{0.95,0.95,0.92}
\lstdefinestyle{mystyle}{
    backgroundcolor=\color{backcolour},   
    commentstyle=\color{codegreen},
    keywordstyle=\color{magenta},
    numberstyle=\tiny\color{codegray},
    stringstyle=\color{codepurple},
    basicstyle=\ttfamily\footnotesize,
    breakatwhitespace=false,         
    breaklines=true,                 
    captionpos=b,                    
    keepspaces=true,                 
    numbers=left,                    
    numbersep=5pt,                  
    showspaces=false,                
    showstringspaces=false,
    showtabs=false,                  
    tabsize=2
}
\lstset{style=mystyle}

\algrenewcommand\algorithmicindent{1.0em}
\algnewcommand\algorithmicswitch{\textbf{switch}}
\algnewcommand\algorithmiccase{\textbf{case}}
\algnewcommand\algorithmicassert{\texttt{assert}}
\algnewcommand\Assert[1]{\State \algorithmicassert(#1)}%
% New "environments"
\algdef{SE}[SWITCH]{Switch}{EndSwitch}[1]{\algorithmicswitch\ #1\ \algorithmicdo}{\algorithmicend\ \algorithmicswitch}%
\algdef{SE}[CASE]{Case}{EndCase}[1]{\algorithmiccase\ #1}{\algorithmicend\ \algorithmiccase}%
\algtext*{EndSwitch}%
\algtext*{EndCase}%

\makeatletter
\newcommand{\algmargin}{\the\ALG@thistlm}
\makeatother
\newlength{\forwidth}
\settowidth{\forwidth}{\algorithmicfor\ }
\algdef{SE}[parFOR]{parFor}{EndparFor}[1]
  {\parbox[t]{\dimexpr\linewidth-\algmargin}{%
     \hangindent\forwidth\strut\algorithmicfor\ #1\ \algorithmicdo\strut}}{\algorithmicend\ \algorithmicfor}%
\newlength{\forif}
\settowidth{\forif}{\algorithmicif\ }
\algnewcommand{\parState}[1]{\State%
  \parbox[t]{\dimexpr\linewidth-\algmargin}{\strut #1\strut}}

% math environments
\newtheorem{definition}{Definition}[section]
\newtheorem{lemma}{Lemma}[section]

\newtheorem{corollary}{Corollary}[section]
\newtheorem{proposition}{Proposition}[section]
\newtheorem{theorem}{Theorem}[section]
\newtheorem{remark}{Remark}[section]

\newtheorem{problem}{Problem}[section]

\title{\LARGE \bf
Guaranteed Completion of Complex Tasks via \\ Temporal Logic Trees and Hamilton-Jacobi Reachability
}

\author{Frank J. Jiang$^1$, Kaj Munhoz Arfvidsson$^1$, Chong He$^2$, Mo Chen$^2$, Karl H. Johansson$^1$,%
\thanks{
    This work was partially supported by the Wallenberg Artificial Intelligence, Autonomous Systems, and Software Program (WASP) funded by the Knut and Alice Wallenberg Foundation. It was also partially supported by the Swedish Research Council, Swedish Research Council Distinguished Professor Grant 2017-01078, the Knut and Alice Wallenberg Foundation Wallenberg Scholar Grant, and the Swedish Innovation agency (Vinnova), under grant 2021-02555 Future 5G Ride, within the Strategic Vehicle Research and Innovation program (FFI).}% <-this % stops a space
\thanks{
    $^{1}$F. J. Jiang, K. Munhoz Arfvidsson, and K. H. Johansson are with the Division of Decision and Control Systems, EECS, KTH Royal Institute of Technology, Malvinas v{\"a}g 10, 10044 Stockholm, Sweden, email: {\tt\small \{frankji, kajarf, kallej\}@kth.se}. They are also affiliated with the Integrated Transport Research Lab and Digital Futures.}%
\thanks{
    $^{2}$C. He and M. Chen are with the School of Computing Science, Simon Fraser University, Burnaby, BC V5A 1S6, Canada, email: {\tt\small chong\_he@sfu.ca, mochen@cs.sfu.ca}.}%
}

\begin{document}

\maketitle
\thispagestyle{empty}
\pagestyle{empty}

%%%%%%%%%%%%%%%%%%%%%%%%%%%%%%%%%%%%%%%%%%%%%%%%%%%%%%%%%%%%%%%%%%%%%%%%%%%%%%%%
\begin{abstract}
In this paper, we present an approach for guaranteeing the completion of complex tasks with cyber-physical systems (CPS). Specifically, we leverage temporal logic trees constructed using Hamilton-Jacobi reachability analysis to (1) check for the existence of control policies that complete a specified task and (2) develop a computationally-efficient approach to synthesize the full set of control inputs the CPS can implement in real-time to ensure the task is completed. We show that, by checking the approximation directions of each state set in the temporal logic tree, we can check if the temporal logic tree suffers from the ``leaking corner issue,'' where the intersection of reachable sets yields an incorrect approximation. By ensuring a temporal logic tree has no leaking corners, we know the temporal logic tree correctly verifies the existence of control policies that satisfy the specified task. After confirming the existence of control policies, we show that we can leverage the value functions obtained through Hamilton-Jacobi reachability analysis to efficiently compute the set of control inputs the CPS can implement throughout the deployment time horizon to guarantee the completion of the specified task. Finally, we use a newly released Python toolbox to evaluate the presented approach on a simulated driving task.
\end{abstract}

\newcommand{\statespace}{\mathbb R^{n_x}}
\newcommand{\inputspace}{\mathbb R^{n_u}}
\newcommand{\inputbound}{\mathcal U}
\newcommand{\ctrlfuncspace}{\mathbb U}

\newcommand{\reach}{\mathcal R}
\newcommand{\rci}{\mathcal{RCI}}

\newcommand{\oreach}{\overline{\reach}}
\newcommand{\ureach}{\underline{\reach}}
\newcommand{\orci}{\overline{\rci}}
\newcommand{\urci}{\underline{\rci}}

\newcommand{\ltltrue}{\textit{true}}
\newcommand{\ltlfalse}{\textit{false}}
\newcommand{\ltlnot}{\neg}
\newcommand{\ltlor}{\lor}
\newcommand{\Ltlor}{\bigvee}
\newcommand{\ltland}{\land}
\newcommand{\Ltland}{\bigwedge}
\newcommand{\ltlimply}{\rightarrow}
\newcommand{\ltlnext}{\bigcirc}
\newcommand{\ltluntil}{\,\mathsf{U}\,}
\newcommand{\ltlalways}{\square}
\newcommand{\ltleventually}{\lozenge}
\newcommand{\ltlsatisfy}{\models}

%%%%%%%%%%%%%%%%%%%%%%%%%%%%%%%%%%%%%%%%%%%%%%%%%%%%%%%%%%%%%%%%%%%%%%%%%%%%%%%%
\section{Introduction}\label{sec:intro}
%% INTRODUCTION

Over the past few decades, there has been a significant surge in interest towards the development of control techniques for CPS that offer formal safety and liveness guarantees. As CPS become more common in various applications, we need to ensure that these systems not only meet safety or task requirements, but are guaranteed to never violate them during deployment. This challenge has called for the proposal of rigorous methodologies around designing, validating, and implementing controllers for CPS in a way that ensures that safety and liveness is always fulfilled, even in varying or unpredictable environments.

Many of the developed approaches are based on safety/liveness filters or automata-based temporal logic approaches. For CPS, there is a large variety of safety/liveness filter-based control approaches~\cite{Wabersich2023}, such as Hamilton-Jacobi reachability analysis-based approaches~\cite{Mitchell2002, Bansal_Chen_Herbert_Tomlin_2017}, control barrier function-based approaches~\cite{Ames2016, Choi2021}, and zonotope-based approaches~\cite{Althoff2015, Kochdumper2021}. Since safety/liveness filters are developed around the analysis of the propagation of dynamics, the resultant controllers benefit from strong, low-level safety guarantees that take into account phenomenon such as the nonlinearity of the underlying dynamics or bounded disturbances. However, with many safety/liveness filter-based approaches, the tasks that are being solved are usually simple reach-avoid problems and the difficulty of the safety filter design can grow quickly as the complexity of the task grows. In contrast, automata-based temporal logic control approaches, such as~\cite{Yordanov2011, Karaman2008, Wongpiromsarn2012, Ulusoy2014, Belta2017}, leverage the richness of temporal logic languages like linear temporal logic to specify, verify, and synthesize control policies for CPS. Although automata-based temporal logic approaches are powerful for working with complex tasks, the application of these approaches to nonlinear systems or disturbed systems can be impractical due to poor online scalability~\cite{Gao2021b}.

\begin{figure}[t]
    \centering
    \includegraphics[width=0.48\textwidth]{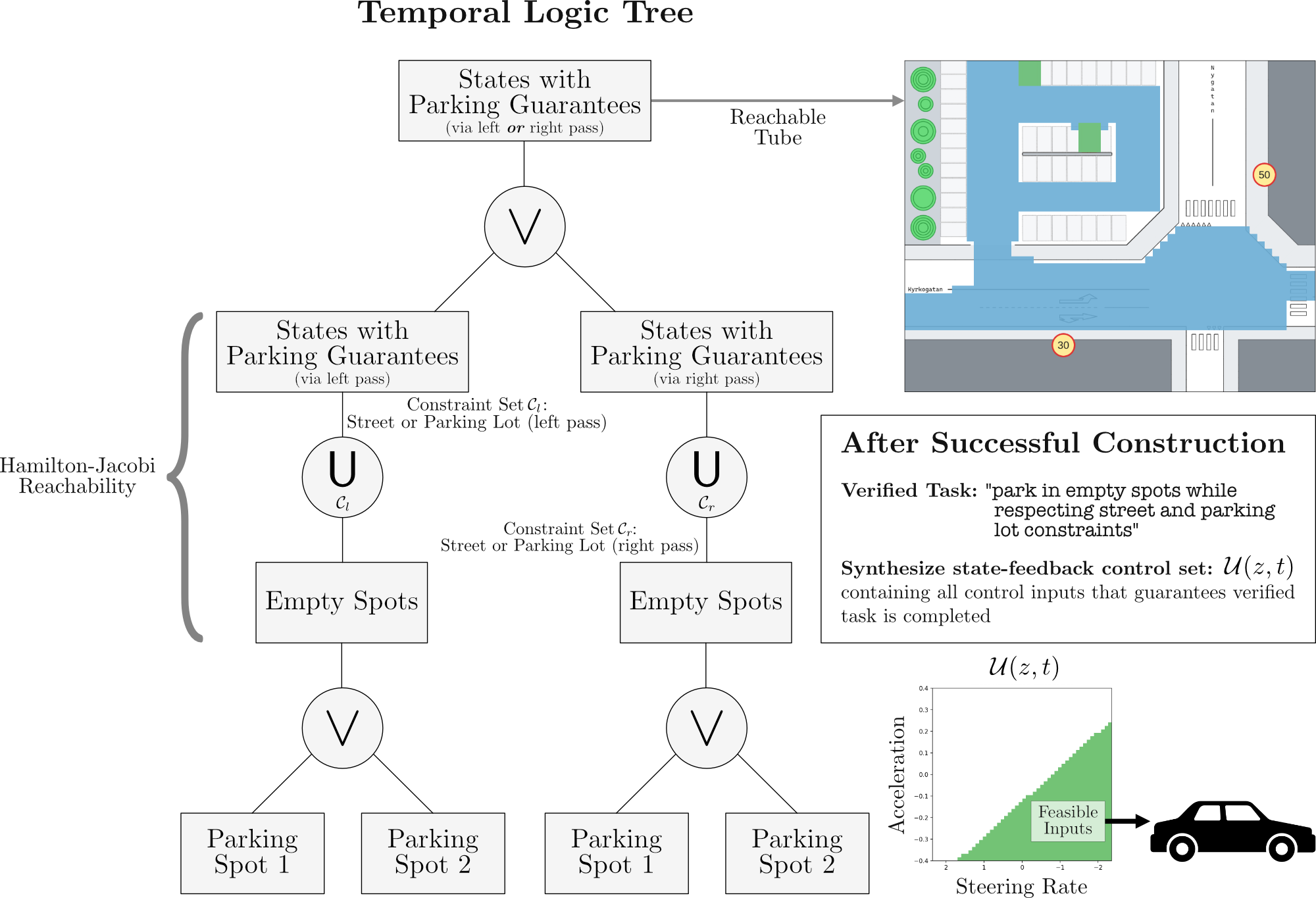}
    \caption{We illustrate and annotate an example temporal logic tree that is used to guarantee the completion of a vehicle parking task.}
    \label{fig:tlt_example}
\end{figure}

To combine the strengths of both approaches, there have been a number of proposals to combine safety/liveness filters with temporal logic over the recent years. In~\cite{Chen2018}, authors explore the use of Hamilton-Jacobi reachability analysis to synthesize control sets for satisfying signal temporal logic specifications. In~\cite{Lindemann2018}, authors explore the application of control barrier functions to efficiently synthesize control policies for a signal temporal logic fragment. More recently, authors introduce a tree-based computation model called temporal logic trees that directly utilizes backward reachability analysis to verify and synthesize control sets for linear temporal logic~\cite{Gao2021b} and signal temporal logic~\cite{Pian2023}. While temporal logic trees have shown initial promise in CPS applications such as automated parking~\cite{Jiang2020a} and remote driving~\cite{Jiang2020b}, there are still a number of challenges with the general application of temporal logic trees. Notably, one of the main challenges is explicitly synthesizing the set of control inputs that is guaranteed to satisfy the constructed temporal logic tree.

\subsection{Contribution}
The main contribution of this paper is an approach that efficiently synthesizes the least-restrictive set of control inputs that a CPS can implement to guarantee the completion of a specified task. To do this, we leverage the value functions resulting from Hamilton-Jacobi reachability analysis to efficiently compute least-restrictive control sets using a computation inspired by the work presented in~\cite{he2023efficient}. We start by detailing how to construct a temporal logic tree using Hamilton-Jacobi reachability analysis and show how we can check if there exist control policies that satisfy the constructed temporal logic tree. Then, we develop a computationally efficient approach to synthesizing least-restrictive control sets from the value functions underlying the constructed temporal logic tree and evaluate the approach on a simulated driving task.
Explicitly, the contributions of this paper can be summarized as follows:

\begin{enumerate}
    \item we detail an algorithm for checking for the existence of control policies that satisfy a constructed temporal logic tree,
    \item we introduce a computationally-efficient approach for explicitly computing least-restrictive control sets from temporal logic trees,
    \item we evaluate the methods presented in this paper on a simulated driving task using the newly open-sourced toolbox called ``Python Specification and Control with Temporal Logic Trees'' (pyspect)\footnote{\url{https://github.com/KTH-SML/pyspect}}.
\end{enumerate}

We also note that while pyspect is exemplified using Hamilton-Jacobi reachability analysis, the toolbox is designed to easily work with any Python-based reachability analysis.

\section{Preliminaries}\label{sec:prelim}
% Preliminary Material
In this section, we recall and introduce preliminary material that we use in the rest of the paper. Then, in the following section, we start using this material to clearly state the challenges and problems addressed in this work.

\subsection{System Dynamics}
In this work, we consider systems with the following control-affine dynamics
\begin{equation}\label{eq:system}
    \dot z = f(z) + g(z) u,
\end{equation}
where, $z\in\statespace$ and $u\in\inputbound \subset \inputspace$. $f$ and $g$ is uniformly continuous, bounded, and Lipschitz continuous in $z$. Given deployment time horizon $T$, we denote control functions as $u(\cdot): [0, T] \rightarrow \inputbound$, which we assume are measurable, and let $\ctrlfuncspace$ be the function space containing all $u(\cdot)$. Let $\zeta(t; z_0, t_0, u(\cdot)) \in \statespace$ be the state of system~\eqref{eq:system} at time $t$ along a trajectory starting from initial state $\zeta(t_0; z_0, t_0, u(\cdot))=z_0$ under $u(\cdot)$. For simplicity, we will sometimes write $\zeta(\cdot)$ to denote a trajectory of system~\eqref{eq:system}.

\subsection{Temporal Logic}
In this section, we introduce the temporal logic we use to define complex tasks for system~\eqref{eq:system}. In this work, we work with linear temporal logic (LTL). While LTL is a simpler logic compared to other popular logics like signal temporal logic, we choose to work with LTL in this work since we can express sufficiently complex tasks for our examples.

An LTL formula is defined over a finite set of atomic propositions $\mathcal{AP}$ with both logic and temporal operators. We can describe the syntax of LTL with:
\begin{eqnarray*}
\varphi ::= \ltltrue \mid p\in \mathcal{AP} \mid \ltlnot \varphi \mid \varphi_1 \ltland \varphi_2   \mid  \varphi_1 \ltluntil \varphi_2,
\end{eqnarray*}
where $\ltluntil$ denotes the ``until" operators. By using the negation operator and the conjunction operator, we can define disjunction, $\varphi_1\ltlor \varphi_2=\ltlnot (\ltlnot \varphi_1 \ltland \ltlnot \varphi_2)$. Then, by employing the until operator, we can define: (1) eventually, $\ltleventually \varphi=\ltltrue \ltluntil \varphi$ and (2) always, $\ltlalways \varphi= \ltlnot  \ltleventually \ltlnot \varphi$. In this work, we omit the next operator $\bigcirc$, since we develop our approach using Hamilton-Jacobi reachability analysis for continuous time models like~\eqref{eq:system}. Instead of working with LTL over infinite traces, we work with LTL over finite traces. The semantics for LTL over finite traces can be adapted from the semantics of the more common LTL over infinite traces by introducing a ``last'' time $T$ and replacing $\infty$ with $T$~\cite{Giacomo2013}. This is particularly useful for working with general, nonlinear reachability analysis approaches as they typically do not compute or approximate infinite horizon reachable sets. In this work, we refer to $T$ as the ``deployment time horizon''.

\begin{definition}\label{def:semantics}
  \textbf{(LTL semantics)} For an LTL formula $\varphi$, a trajectory $\zeta(\cdot)$, a deployment time horizon $T$, and a time instant check $0 \leq t \leq T$, the satisfaction relation $(\zeta(\cdot),t)\ltlsatisfy \varphi$  is defined as
  \begin{eqnarray*}
  &&(\zeta(\cdot),t) \ltlsatisfy p\in \mathcal{AP} \Leftrightarrow p \in l(\zeta(t)), \\
  && (\zeta(\cdot),t) \ltlsatisfy  \ltlnot  \varphi \Leftrightarrow (\zeta(\cdot),t) \nvDash \varphi, \\
  && (\zeta(\cdot),t) \ltlsatisfy \varphi_1 \ltland \varphi_2 \Leftrightarrow (\zeta(\cdot),t) \ltlsatisfy \varphi_1 \ltland  (\zeta(\cdot),t) \ltlsatisfy \varphi_2, \\
   && (\zeta(\cdot),t) \ltlsatisfy \varphi_1 \ltlor  \varphi_2 \Leftrightarrow (\zeta(\cdot),t) \ltlsatisfy \varphi_1 \ltlor  (\zeta(\cdot),t) \ltlsatisfy \varphi_2, \\
  && (\zeta(\cdot),t) \ltlsatisfy \varphi_1 \ltluntil \varphi_2 \Leftrightarrow \exists t_1\in [t, T] \ \text{s.t.} \\
 &&  \hspace{3.5cm} \begin{cases}
                (\zeta(\cdot),t_1) \ltlsatisfy \varphi_2, \\
                \forall t_2\in [t,t_1), (\zeta(\cdot),t_2) \ltlsatisfy \varphi_1,
              \end{cases} \\
  && (\zeta(\cdot),t) \ltlsatisfy \ltleventually \varphi \Leftrightarrow \exists t_1\in [t, T], \ \text{s.t.} \ (\zeta(\cdot),t_1) \ltlsatisfy \varphi, \\
  && (\zeta(\cdot),t) \ltlsatisfy \ltlalways \varphi \Leftrightarrow \forall t_1\in [t,T], \ \text{s.t.} \ (\zeta(\cdot),t_1) \ltlsatisfy \varphi.
  \end{eqnarray*}
where $p$ is an atomic proposition and $l(\cdot)$ is a labeling function defined as $l: \statespace \rightarrow 2^{\mathcal{AP}}$. For a proposition $p$, we define the following function for relating the proposition to a state set: $\mathcal L^{-1}(p) = \{ z \in \statespace \; | \; p\in l(z)\}$. Using $l(\cdot)$ and $\mathcal L^{-1} (\cdot)$, we are able to associate sets in the state space of system~\eqref{eq:system} with atomic propositions.
\end{definition}

\begin{definition}\label{def:true_sat}
    \textbf{(True Satisfaction Set)} For an LTL formula $\varphi$, we say $\Phi$ is $\varphi$'s true satisfaction set when $\Phi$ is the largest set in $\statespace$ where $\forall z \in \Phi$, $\exists u(\cdot) \in \ctrlfuncspace$, $\forall t \in [0, T]$, $(\zeta(t; z, 0, u(\cdot)), t) \ltlsatisfy \varphi$.
\end{definition}

\subsection{Temporal Logic Trees}
Once we have specified an LTL specification for our system, we can use temporal logic trees (TLT) to check the feasibility and synthesize control sets for satisfying the specification. In Fig.~\ref{fig:tlt_example}, we illustrate an example of a TLT that is constructed for a parking task.

\begin{definition}
  \textbf{(Temporal Logic Tree)} A temporal logic tree (TLT) is a tree for which each node is either a state set node corresponding to a subset of $\mathbb{R}^{n_z}$, or an operator node corresponding to one of the operators $\{\ltlnot, \ltland,\ltlor,\ltluntil,\ltlalways\}$; the root node and the leaf nodes are state set nodes; if a state set node is not a leaf node, its unique child  is an operator node; the children of any operator node are state set nodes.
\end{definition}

From a specified LTL formula, we can follow~\cite[Algorithm]{Gao2021b} to construct a temporal logic tree. Once the temporal logic tree is constructed, we can evaluate whether the LTL formula is satisfiable by checking whether the root of the temporal logic tree is an empty set or not~\cite[Theorem V.1]{Gao2021b}.

\subsubsection{Reachability Analysis}
To construct a TLT, we need to compute the following types of reachable tubes.

\begin{definition}\label{def:brs}
\textbf{(Backward Reachable Tube)}
Given system~\eqref{eq:system}, a target set $\mathcal T \subseteq \statespace$, and a constraint set $\mathcal C \subseteq \statespace$, we define the backward reachable tube as
    \[\begin{split}
    \mathcal R(\mathcal T; \mathcal C) = \{ 
        z \; | \;
            & \exists u(\cdot) \in \ctrlfuncspace,\\
            & \exists \tau \in [0,T), \: \zeta(\tau; z, 0, u(\cdot)) \in \mathcal T, \\
            & \forall \tau' \in [0, \tau), \:
            \zeta(\tau'; z, 0, u(\cdot)) \in \mathcal C
    \},
    \end{split}\]
where $\mathcal R(\mathcal T; \mathcal C)$ contains the set of states that are able to reach the target set $\mathcal T$ while staying within constraint set $\mathcal C$. For simplicity, we will denote this operation with $\mathcal R(\cdot)$.
\end{definition}

\begin{definition}\label{def:rcis}
\textbf{(Robust Control Invariant Set)} For system~\eqref{eq:system} and constraint set $\mathcal C \subseteq \statespace$, $\mathcal{RCI}(\mathcal C) \subseteq \statespace$ the largest robust control invariant set such that $\forall z \in \mathcal{RCI}(\mathcal C)$ there $\exists u(\cdot)\in \ctrlfuncspace$ such that $\forall \tau \in [0, T]$, $\zeta(\tau; z, 0, u(\cdot)) \in \mathcal C$.
\end{definition}

We denote the over- and under-approximation of the true backward reachable tube (or largest RCIS) with $\oreach$ and $\ureach$ (or $\orci$ and $\urci$), respectively.

\subsection{Least-Restrictive Control Sets}
After constructing TLTs, we can compute control sets that can be used to saturate the inputs to system~\eqref{eq:system} to guarantee that the verified task is completed. We define a least-restrictive control set as the following: 

\begin{definition}
    \textbf{(Least-Restrictive Control Set)} Given system~\eqref{eq:system}, state $z \in \statespace$, time $t \in [0, T]$, a reachable tube $S$, the least-restrictive control set is defined as
    \begin{multline}\label{eq:least_restrictive}
        \mathcal U(z, t) = \{u(t): u(\cdot) \in \ctrlfuncspace, \forall \tau \in [t, T], \\ \zeta(\tau; z, t, u(\cdot)) \in \mathcal S\},
    \end{multline}
    where $\mathcal U(z, t)$ is a state and time feedback control set that contains all the control inputs system~\eqref{eq:system} is allowed to implement at state $z$ and time $t$ to stay within the reachable tube for the rest of the deployment time horizon.
\end{definition}

\begin{figure*}
    \centering
    \includegraphics[width=0.98\textwidth]{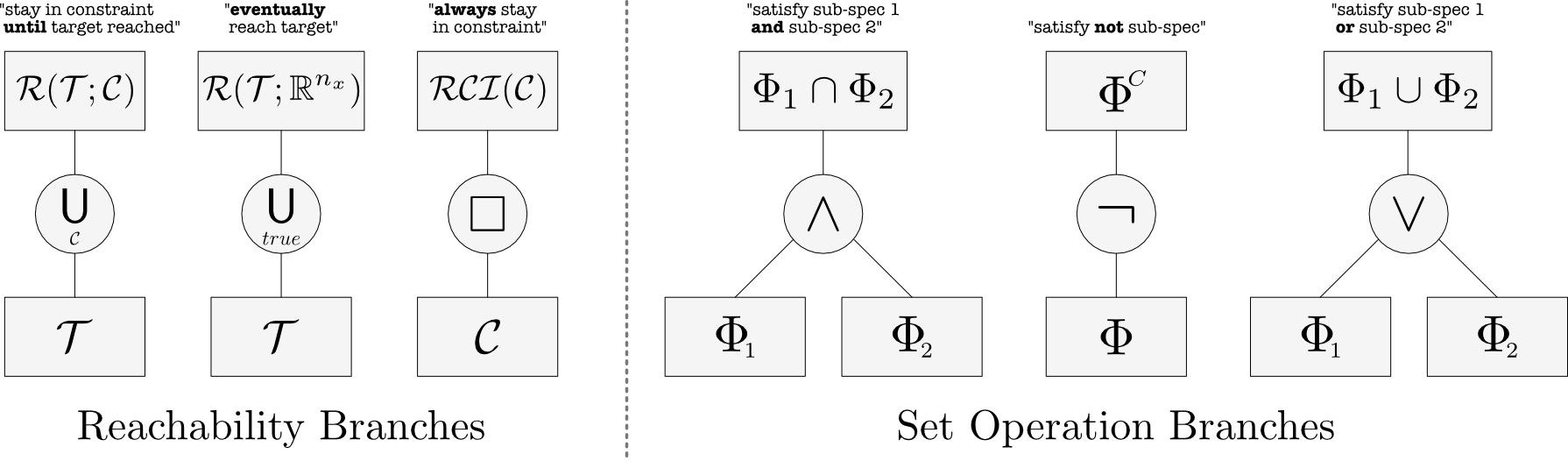}
    \caption{We illustrate and describe all the different branches that can occur in a TLT and their corresponding state set and operator nodes.}
    \label{fig:branches}
\end{figure*}

\section{Challenge: Completion of Complex Tasks}\label{sec:mot}
% Motivating Example
In this section, we clarify the specific problems we address in this work. To do this, we start by considering an example that will be evaluated at the end of this paper: vehicle parking. As is thoroughly discussed in~\cite{Jiang2020a}, although vehicle parking may seem like a task that is simple, the complexity of the task easily grows in parking scenarios with stringent requirements. There may be varying speed requirements, hard-to-navigate driving spaces, multiple available spots, etc. As the complexity of the task grows or even changes in real-time (in the case where the parking environment changes), it's critically important we are able to guarantee the vehicle is able to safely and successfully complete it's parking task. One approach to this problem is constructing TLTs using Hamilton-Jacobi (HJ) reachability analysis~\cite{Jiang2020a, Jiang2020b, Yu2023}. A challenge faced by these works is that the least-restrictive control set is used implicitly and is not explicitly computed. Additionally, there is still no formal treatment for the challenge of checking whether there exists control policies that satisfy the constructed temporal logic tree. To address these two challenges, we solve the following problems.

\begin{problem}
    Given system~\eqref{eq:system} and a TLT $\mathfrak T$ constructed using HJ reachability analysis for LTL task $\varphi$, guarantee that there exists control policies where system~\eqref{eq:system}'s resultant trajectory fully satisfies $\mathfrak T$.
\end{problem}

\begin{problem}
    Given that there exists control policies such that system~\eqref{eq:system} is able to satisfy the constructed $\mathfrak T$, efficiently utilize the value functions from the HJ reachability analysis to compute a least-restrictive control set $\hat{\mathcal U}$ that contains all of the control inputs system~\eqref{eq:system} can implement in real-time to guarantee it satisfies $\varphi$.
\end{problem}

\section{Constructing TLT using HJ Reachability}\label{sec:method}
% Method
In this section, we will detail the HJ reachability computations necessary for constructing the temporal logic tree. In particular, we introduce the computations involved with approximating the satisfaction sets of $\{\ltluntil, \ltleventually, \ltlalways\}$.

\subsection{Key HJ PDEs}\label{subsec:hj_reach}
We define two key HJ PDEs that need to be solved to approximate solutions to $\mathcal R (\cdot)$ and $\mathcal{RCI}(\cdot)$. For the first PDE, with implicit target surface function $V_{\mathcal T}$ as our initial condition, we solve for solution $V_{\mathcal R}$ under implicit constraint surface function $V_{\mathcal C}$ in the constrained, reach HJ PDE 
\begin{equation}\begin{split}\label{eq:reach_pde}
    \max\{D_{t'} V_{\mathcal R}(z, t') + \min_{u\in\inputbound}D_z V_{\mathcal R}(z, t') \cdot f(z, u),\\ -V_{\mathcal C}(z) - V_{\mathcal R}(z, t')\} &= 0, \\
    V_{\mathcal R}(z, 0) &= V_{\mathcal T}(z), \\
\end{split}\end{equation}
where $t' \leq 0$ is the time used to solve the PDE backwards in time. With $V_{\mathcal C}$ and $V_{\mathcal T}$ defined such that $\mathcal C$ and $\mathcal T$ are their respective zero sub-level sets (e.g. using a signed distance function), we compute the backward reachable tube defined in Definition~\ref{def:brs} as the zero sub-level set of the solution of~\eqref{eq:reach_pde} at time $t' = -T$, which we write explicitly as
\begin{equation}\label{eq:brs_computation}
\mathcal R(\mathcal T; \mathcal C) = \{z \; | \; V_{\mathcal R}(z, -T) \leq 0\}.
\end{equation}
Then, for the second PDE, it will be useful to introduce the concept of avoid tubes, although they are not always necessary for temporal logic trees.
\begin{definition}\label{def:avoid}
    \textbf{(Avoid Tube)} Given the system~\eqref{eq:system} and a set the system should stay outside of $\mathcal O \subset \statespace$, we define an avoid tube as
    \[\begin{split}
        \mathcal A(\mathcal O) = \{ 
            z \; | \; \forall u(\cdot) \in \ctrlfuncspace,
            \exists \tau \in [0,T), \: \zeta(\tau; z, 0, u(\cdot)) \in \mathcal A
        \},
    \end{split}\]
\end{definition}

\smallskip

We then introduce the second HJ PDE that we solve to find avoid tubes. With the implicit avoid surface function $V_{\mathcal O}$ as an initial condition, we solve for $V_{\mathcal A}$ in the following avoid HJ PDE,
\begin{equation}\begin{split}\label{eq:avoid_pde}
    D_{t'} V_{\mathcal A}(z, t') + \max_{u\in\inputbound}D_z V_{\mathcal A}(z, t') \cdot f(z, u) &= 0, \\
    V_{\mathcal A}(z, 0) &= V_{\mathcal O}(z), \\
\end{split}\end{equation}
We can compute the avoid tube defined in Definition~\ref{def:avoid} as the zero sub-level set of the solution of~\eqref{eq:avoid_pde} at time $t' = -T$, which we write explicitly as
\begin{equation}\label{eq:avoid_computation}
    \mathcal A(\mathcal O) = \{z \; | \; V_{\mathcal A}(z, -T) \leq 0\}.
\end{equation}
Next, we show how we compute~\eqref{eq:brs_computation} and~\eqref{eq:avoid_computation} to approximate the satisfaction sets for $\{\ltluntil, \ltleventually, \ltlalways\}$ and verify that there exists control policies that satisfy each operator.

\subsection{Approximating $\ltluntil$ (Until) and $\ltleventually$ (Eventually)}

Let $\varphi_1$ and $\varphi_2$ be two LTL sub-formulae. In the left-most branch in Fig.~\ref{fig:branches}, we can see how $\varphi_1 \ltluntil \varphi_2$ looks as a TLT branch when $\mathcal C$ and $\mathcal T$ correspond to the satisfaction sets of $\varphi_1$ and $\varphi_2$, respectively. Now, let $\Phi_1$, $\Phi_2$, $V_{\varphi_1}$, $V_{\varphi_2}$ be the satisfaction state sets and surface functions corresponding to LTL formulae $\varphi_1$ and $\varphi_2$. Then, we approximate $\varphi_1 \ltluntil \varphi_2$ based on~\eqref{eq:brs_computation} by finding
\begin{equation}\label{eq:until_computation}
    \mathcal R(\Phi_2; \Phi_1) = \{z \; | \; V_{\mathcal R}(z, -T) \leq 0\}.
\end{equation}
In other words, the until operator corresponds to solving a constrained backward reachability problem with the target as the satisfaction set of $\varphi_2$ and with the constraint as the satisfaction set of $\varphi_1$. Since $\ltleventually \varphi_2 = true \ltluntil \varphi_2$, we perform a similar computation to~\eqref{eq:until_computation} to find the satisfaction set for the eventually operator. However, since $\Phi_1$ becomes the satisfaction set of $true$, we end up solving an unconstrained backward reachability problem, as is illustrated in the middle reachability branch in Fig.~\ref{fig:branches}. For most solvers, this is equivalent to dropping $V_{\mathcal C}(z)$ in~\eqref{eq:reach_pde}.
Depending on how many ancestor $\ltlnot$ operator nodes the $\ltluntil$ node has, we may need to change whether we under- ($\ureach(\cdot)$) or over-approximate ($\oreach(\cdot)$) solutions to~\eqref{eq:brs_computation}. This can be done numerically in level-set methods for solving~\eqref{eq:brs_computation}~\cite{Mitchell2002}.

\subsection{Approximating $\ltlalways$ (Always)}

Let $\varphi$ and $\Phi \subset \statespace$ be an LTL sub-formula and the corresponding satisfaction set. In Definitions~\ref{def:rcis}, $\mathcal{RCI}(\cdot)$ is defined around the existence of a control policy $u(\cdot)$ that keeps system~\eqref{eq:system} within a set for all time. Since this is consistent with the satisfaction requirement of $\ltlalways \varphi$, to approximate the satisfaction set of $\ltlalways \varphi$ we can find an approximate solution to $\mathcal {RCI}(\Phi)$.  In Fig.~\ref{fig:branches}, we can see how $\ltlalways \varphi$ looks as a TLT branch when $\mathcal C = \Phi$. For approximating $\mathcal{RCI}(\Phi)$, we utilize the avoid tube defined in Definition~\ref{def:avoid}. Instead of directly approximating the largest set of states where there exists control policies that keep the system in $\Phi$, it is more common to approximate the set of states where for all control policies force the system outside of $\Phi$ and taking the complement of this set. In other words, we compute $\mathcal{RCI}(\cdot)$ as the following:
\begin{equation}\label{eq:rci}
    \mathcal{RCI}(\Phi) = \mathcal A^C(\Phi^C).
\end{equation}
Based on~\eqref{eq:avoid_computation} and with $\mathcal O = \Phi^C$, we approximate $\ltlalways \varphi$ with
\begin{equation}
    \mathcal{RCI}(\Phi) = \{z \; | \; V_{\mathcal A}(z, -T) > 0\}.
\end{equation}
Similar to the $\ltluntil$ operator, depending on how many ancestor $\ltlnot$ operator nodes the $\ltlalways$ node has, we may need to change whether we under- ($\urci(\cdot)$) or over-approximate ($\orci(\cdot)$) solutions, which is done by numerically over- or under-approximating ~\eqref{eq:avoid_computation}, respectively.

\section{Computing Least-Restrictive Control Sets}\label{sec:ctrl}
% Control Set Synthesis

\begin{algorithm}[t]
\small
\caption{\texttt{ctrlExists}}
\label{alg:approx}
\textbf{Input}: Root node $\mathfrak R$ of a constructed TLT \\
\textbf{Output}: \texttt{E}, \texttt{O}, \texttt{U}, or \texttt{I}

\begin{algorithmic}[1]
    \If{\texttt{isLeaf}($\mathfrak R$)}\label{ln:leaf}
        \State \Return \texttt{approxDirection}($\mathfrak R$) \textit{\#} \texttt{E}, \texttt{O}, \texttt{U}
    \EndIf
    \State $c$ = \texttt{child}($\mathfrak R$)
    \State $G$ = \texttt{grandChildren}($\mathfrak R$)
    \If{\texttt{length}($G$) == 1} 
        \State $g$ = $G$
        \State $a$ = \texttt{ctrlExists}($g$)
        \State \textbf{if} {$a$ == \texttt{I}} \textbf{then} \Return \texttt{I}\label{ln:inv_unary}
        \State \textit{\# check operator nodes with single child}
        \If{$c$ == $\ltluntil$ or $c$ == $\ltlalways$}\label{ln:until_alw}
            \State $a_o$ = \texttt{approxDirection}($c$)
            \State \textbf{if} $a_0$ != $a$ \textbf{then} \Return \texttt I \textbf{else} \Return $a$
        \ElsIf{$c$ == $\ltlnot$}\label{ln:not}
            \State \Return -1 * $a$
        \EndIf
    \Else
        \State $g_1$, $g_2$ = $G$
        \State $a_1$, $a_2$ = \texttt{ctrlExists}($g_1$), \texttt{ctrlExists}($g_2$)
        \State \textbf{if} $a_1$ == \texttt{I} or $a_2$ == \texttt{I} \textbf{then} \Return \texttt{I}\label{ln:inv_binary}
        \State \textit{\# check operator nodes with two children}
        \If{$c$ == $\ltland$}\label{ln:and}
            \State \textbf{if} $a_1$==\texttt{U} or $a_2$==\texttt{U} \textbf{then}
                \Return \texttt{I} \textbf{else} \Return \texttt O
        \ElsIf{$c$ == $\ltlor$}\label{ln:or}
            \If {$a_1$ == $a_2$}
                \Return $a_1$
            \ElsIf {$a_1$ == \texttt{E}}
                \Return $a_2$
            \ElsIf {$a_2$ == \texttt{E}}
                \Return $a_1$
            \Else {}
                \Return \texttt{I}
            \EndIf
        \EndIf 
    \EndIf
\end{algorithmic}
\end{algorithm}

In this section, we present our approach for efficiently computing least-restrictive control sets from a TLT constructed using HJ reachability analysis. We start by introducing an algorithm for checking quickly checking if control policies still exist for satisfying the TLT. Then, we introduce the explicit computation of the least-restrictive control sets that enable a system to follow the existing control policies.

\subsection{Checking for existing control policies}\label{subsec:tlt}

From~\cite{Gao2021b}, we know that the constructed TLT for an LTL specification $\varphi$ should under-approximate the true satisfaction set of $\varphi$. However, as is emphasized in~\cite[Theorem V.1]{Gao2021b} the under-approximation is tied to the existence of control policies that satisfy the constructed TLT. In some applications or tasks, this can be obvious to the designer. However, in many applications or tasks, this should be automatically checked. One of the key challenges with automatically checking and ensuring that there exists control policies that satisfy the constructed TLT is the so-called ``leaking corner problem''. The leaking corner problem occurs when two satisfaction sets are intersected. This problem also arises in the context of system decomposition~\cite{Chen2018a, he2023efficient} and classic reach-avoid problems~\cite{Lee2019}. In the construction of TLTs, this problem is further complicated by the fact that in an LTL formula, there may be many nested $\{\ltlnot, \ltland, \ltlor\}$, which all have requirements and effects around approximation directions. To address this, we start by characterizing the effect of the set operations underlying $\{\ltlnot, \ltland, \ltlor\}$.

\begin{lemma}\label{lemma:not}
\textbf{(Effect of $\ltlnot$)} Let $\varphi$ be an LTL subformula, $\Phi \subset \statespace$ be $\varphi$'s true satisfaction set. Similarly, let $\Phi'$ be the true satisfaction set of $\ltlnot \varphi$. Also, let $\hat \Phi$ be either an over- or under-approximation of $\Phi$.
\begin{itemize}
    \item Let $\hat \Phi$ be an over-approximation of $\Phi$ ($\hat \Phi \supset \Phi$), then $\hat \Phi^C$ is an under-approximation of $\Phi'$ ($\hat \Phi^C \subset \Phi'$).
    \item Let $\hat \Phi$ be an under-approximation of $\Phi$ ($\hat \Phi \subset \Phi$), then $\hat \Phi^C$ is an over-approximation of $\Phi'$ ($\hat \Phi^C \supset \Phi'$).
\end{itemize}
\end{lemma}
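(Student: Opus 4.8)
The plan is to reduce both bullet points to two ingredients. The first is a semantic identity relating the two true satisfaction sets, namely that the true satisfaction set of $\ltlnot\varphi$ is exactly the set-complement of the true satisfaction set of $\varphi$, i.e.\ $\Phi' = \Phi^C$. The second is the elementary, purely set-theoretic fact that complementation reverses inclusions: for any $A, B \subseteq \statespace$, $A \subseteq B$ implies $B^C \subseteq A^C$. Once these are in hand, the lemma is immediate, so I would state the complement identity as the single substantive claim and then dispatch both cases together.

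Granting $\Phi' = \Phi^C$, the two bullets are symmetric applications of inclusion-reversal with the inclusion flipped. For the first bullet, assuming $\hat\Phi \supseteq \Phi$, inclusion-reversal gives $\hat\Phi^C \subseteq \Phi^C = \Phi'$, so $\hat\Phi^C$ under-approximates $\Phi'$. For the second bullet, assuming $\hat\Phi \subseteq \Phi$, inclusion-reversal gives $\Phi' = \Phi^C \subseteq \hat\Phi^C$, so $\hat\Phi^C$ over-approximates $\Phi'$. I would write these as one short paragraph, since the argument is identical and only the direction of containment changes.

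The main obstacle is establishing the complement identity $\Phi' = \Phi^C$, and I expect this to be the only part requiring real care. The reason is that, by Definition~\ref{def:true_sat}, both $\Phi$ and $\Phi'$ are defined through an existential quantification over control functions in $\ctrlfuncspace$ together with a pointwise-in-time satisfaction requirement, so one cannot simply invoke the negation clause $(\zeta(\cdot),t)\ltlsatisfy \ltlnot\varphi \Leftrightarrow (\zeta(\cdot),t)\nvDash\varphi$ of Definition~\ref{def:semantics} at the level of sets. A priori a single state $z$ might admit one control whose trajectory satisfies $\varphi$ and a different control whose trajectory satisfies $\ltlnot\varphi$, which would place $z$ in both $\Phi$ and $\Phi'$ and break exact complementarity. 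I would therefore handle this step by using the determinism of system~\eqref{eq:system} and the structure of the reachability-based construction to show the two quantified conditions are genuinely mutually exclusive and exhaustive, or equivalently by reducing to the leaf level, where $\Phi = \mathcal{L}^{-1}(p)$ is a fixed state set and $\ltlnot$ acts as a literal complement with no control quantifier in play; the propagation of the complement identity up through the operator nodes then follows the TLT construction. Everything after this identity is routine set algebra.
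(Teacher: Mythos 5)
Your main line is the same as the paper's: establish the complement identity $\Phi' = \Phi^C$ and then apply the elementary fact that complementation reverses inclusions, from which both bullets follow at once. Where you differ is in emphasis: you treat $\Phi' = \Phi^C$ as the one substantive step, whereas the paper disposes of it in a single sentence (``since $\Phi$ and $\Phi'$ are the largest sets satisfying $\varphi$ and $\ltlnot\varphi$, respectively, $\Phi' = \Phi^C$'') and spends its effort on the inclusion reversal. Your instinct is correct and, if anything, more careful than the published proof. Under Definition~\ref{def:true_sat}, membership in a true satisfaction set is an $\exists u(\cdot)$ statement, so $\Phi$ and $\Phi'$ need not partition $\statespace$. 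For a concrete failure, take $\varphi = \ltleventually p$: a state from which one admissible control reaches $\mathcal L^{-1}(p)$ within $[0,T]$ while another avoids it on all of $[0,T]$ lies in both $\Phi$ and $\Phi'$, so $\Phi' \neq \Phi^C$. This is exactly the asymmetry you describe, and the paper's proof does not address it.

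Of your two proposed repairs, note that the first does not close the gap: determinism of the flow of system~\eqref{eq:system} under a fixed $u(\cdot)$ is not the obstacle; the ``nondeterminism'' that breaks complementarity is the freedom in choosing the control, which survives any determinism of the dynamics. The second repair is the viable one: at a leaf, $\Phi = \mathcal L^{-1}(p)$ involves no control quantifier and $\ltlnot$ acts as a literal set complement, so the identity holds exactly there, consistent with how TLT constructions apply $\ltlnot$ to already-computed state sets rather than to temporally quantified subformulae. If you take that route, say explicitly that the lemma is proved for the class of subformulae on which $\ltlnot$ is actually applied in the construction, or add $\Phi' = \Phi^C$ as a hypothesis; as stated for arbitrary $\varphi$, neither your argument nor the paper's establishes this step. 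Everything downstream of the identity is correct and routine.
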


\begin{proof}
   Proof in Appendix~\ref{app:ctrl_exists}
\end{proof}
In other words, the $\ltlnot$ operator reverses the approximation direction.

\begin{algorithm}[t]
\small
\caption{\texttt{leastRestrictiveCtrl}}
\label{alg:ctrl}
\textbf{Input}: state $z\in\statespace$, time $t$, and constructed TLT $\mathfrak T$ \\
\textbf{Output}: least-restrictive control set $\mathcal U_{z,t} \subseteq \inputbound$ for $z$ and $t$
\begin{algorithmic}[1]
    \State $C$ = \texttt{controlTree}($z$, $t$, $\mathfrak T$) \textit{\#~\cite[Algorithm 4]{Gao2021b}}
    \State $\hat{C}$ = \texttt{compressTree}($C$) \textit{\#~\cite[Algorithm 2]{Gao2021b}}
    \State $\mathcal U_{z,t}$ = \texttt{setBacktrack}($\hat{C}$) \textit{\#~\cite[Algorithm 5]{Gao2021b}}
    \State \Return $\mathcal U_{z,t}$
\end{algorithmic}
\end{algorithm}

\begin{lemma}\label{lemma:and}
\textbf{(Effect of $\ltland$)} Let $\varphi = \varphi_1 \ltland \varphi_2$, and $\Phi, \Phi_1, \Phi_2$ be their respective, true satisfaction sets. Then,
\begin{equation}
    \Phi \subseteq \Phi_1 \cap \Phi_2
\end{equation}
\end{lemma}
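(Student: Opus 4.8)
The plan is to argue by a direct set-inclusion: take an arbitrary point $z \in \Phi$ and show it must lie in both $\Phi_1$ and $\Phi_2$, hence in their intersection. The crux is that the true satisfaction set of Definition~\ref{def:true_sat} is built around the existence of a \emph{single} control function that makes the trajectory satisfy the formula at every time instant in $[0,T]$, so the whole argument amounts to carrying that same witness control across from the conjunction to each conjunct.

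First I would unpack Definition~\ref{def:true_sat} for $\varphi = \varphi_1 \ltland \varphi_2$: for $z \in \Phi$ there exists $u(\cdot) \in \ctrlfuncspace$ such that for all $t \in [0,T]$ we have $(\zeta(t; z, 0, u(\cdot)), t) \ltlsatisfy \varphi_1 \ltland \varphi_2$. Next I would apply the semantics of $\ltland$ from Definition~\ref{def:semantics}, namely that $(\zeta(\cdot), t) \ltlsatisfy \varphi_1 \ltland \varphi_2$ holds exactly when both $(\zeta(\cdot), t) \ltlsatisfy \varphi_1$ and $(\zeta(\cdot), t) \ltlsatisfy \varphi_2$ hold. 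Applying this at every $t \in [0,T]$ shows that this one control $u(\cdot)$ is simultaneously a witness that the trajectory from $z$ satisfies $\varphi_1$ at all times and a witness that it satisfies $\varphi_2$ at all times.

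From here I would invoke the maximality clause of Definition~\ref{def:true_sat}. Since $\Phi_1$ is the largest set whose points admit a control making the trajectory satisfy $\varphi_1$ on $[0,T]$, and $z$ admits exactly such a control (the very $u(\cdot)$ above), we conclude $z \in \Phi_1$; by identical reasoning $z \in \Phi_2$, so $z \in \Phi_1 \cap \Phi_2$. As $z$ was arbitrary, $\Phi \subseteq \Phi_1 \cap \Phi_2$.

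The point to be careful about, and the reason the statement is an inclusion rather than an equality, is the direction of the control quantifier. The argument works because a single witness for $\varphi$ splits into a witness for each conjunct. The converse would require recombining a witness $u_1(\cdot)$ for $\varphi_1$ and a possibly different witness $u_2(\cdot)$ for $\varphi_2$ into one control satisfying both, which need not be possible; this asymmetry is exactly what makes $\ltland$ contribute the intersection only as an over-approximation and is what underlies the leaking-corner discussion. I do not anticipate a genuine technical obstacle here, only the need to state clearly that the shared witness travels from the conjunction down to each conjunct while the reverse recombination may fail.
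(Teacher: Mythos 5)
Your proposal is correct and follows essentially the same route as the paper's own proof: unpack Definition~\ref{def:true_sat} for $z \in \Phi$, use the $\ltland$ semantics to split the single witness control $u(\cdot)$ into a witness for each conjunct, and conclude $z \in \Phi_1 \cap \Phi_2$, while noting that the reverse inclusion fails because two separate witnesses $u_1(\cdot)$ and $u_2(\cdot)$ need not recombine. Your explicit appeal to the maximality clause of Definition~\ref{def:true_sat} is a slight tightening of the paper's wording but not a different argument.
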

\smallskip
\begin{proof}
   Proof in Appendix~\ref{app:ctrl_exists}
\end{proof}
In other words, the set intersection underlying $\ltland$ can lead to over-approximation of the true satisfaction set. This is the key leaking corner issue, as we want the constructed TLT to under-approximate the true satisfaction set. However, this issue can be remedied if the $\ltland$ operator node has an $\ltlnot$ operator node as an ancestor.

\begin{corollary}\label{corr:not}
\textbf{(Effect of $\ltlnot$ on $\ltland$)} Let $\varphi = \varphi_1 \ltland \varphi_2$, and $\Phi, \Phi_1, \Phi_2$ be their respective, true satisfaction sets. Then, we get the following under-approximation of $\ltlnot \varphi$
\begin{equation}
    \Phi^C \supseteq (\Phi_1 \cap \Phi_2)^C
\end{equation}
\end{corollary}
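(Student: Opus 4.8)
The plan is to obtain this corollary as an immediate consequence of Lemma~\ref{lemma:and} by complementing both sides of the inclusion it establishes. Lemma~\ref{lemma:and} already gives $\Phi \subseteq \Phi_1 \cap \Phi_2$, so the only additional ingredient I need is the elementary, order-reversing property of set complementation: for any $A, B \subseteq \statespace$, if $A \subseteq B$ then $A^C \supseteq B^C$. First I would instantiate this with $A = \Phi$ and $B = \Phi_1 \cap \Phi_2$; since Lemma~\ref{lemma:and} certifies $\Phi \subseteq \Phi_1 \cap \Phi_2$, taking complements directly yields $\Phi^C \supseteq (\Phi_1 \cap \Phi_2)^C$, which is exactly the claimed inclusion. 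The set-algebraic content is therefore a single line.

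The remaining, and really the only conceptual, step is to interpret this inclusion in terms of approximation direction so as to justify the corollary's framing as an under-approximation of $\ltlnot\varphi$. Here I would invoke Lemma~\ref{lemma:not}: viewing $\hat\Phi = \Phi_1 \cap \Phi_2$ as an over-approximation of $\Phi$, which is precisely what Lemma~\ref{lemma:and} asserts ($\hat\Phi \supseteq \Phi$), Lemma~\ref{lemma:not} tells us that $\hat\Phi^C = (\Phi_1 \cap \Phi_2)^C$ is then an under-approximation of the true satisfaction set of $\ltlnot\varphi$. This is what makes the leaking-corner issue correctable in the stated setting: although $\ltland$ over-approximates, the presence of a negation ancestor flips this into the desired under-approximation.

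I expect no genuine obstacle in the set algebra itself, since the inclusion is a one-line complement of Lemma~\ref{lemma:and}. The one point requiring care is keeping the two notions of complement aligned: the set $\Phi^C$ appearing in the displayed inequality is the complement of $\varphi$'s satisfaction set, whereas the phrase ``under-approximation of $\ltlnot\varphi$'' refers to the true satisfaction set of $\ltlnot\varphi$. These are related through Lemma~\ref{lemma:not} rather than assumed equal, so the main thing to state explicitly is that routing the over-approximation $\Phi_1 \cap \Phi_2$ through a complement produces an under-approximation \emph{of} $\ltlnot\varphi$, and not merely a statement about $\Phi^C$ in isolation.
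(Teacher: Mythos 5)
Your proposal is correct and matches the paper's own argument, which likewise derives the corollary by combining Lemma~\ref{lemma:and} with Lemma~\ref{lemma:not}; you simply spell out the order-reversing complementation step and the approximation-direction interpretation that the paper leaves implicit. No gaps.
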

\begin{proof}
    Seen through Lemma~\ref{lemma:not} and Lemma~\ref{lemma:and}.
\end{proof}
In other words, if an $\ltlnot$ is applied to an $\ltland$, the result is an under-approximation. This means that the resulting set only contains states where there exist control policies that satisfy the corresponding sub-formula.

\begin{remark}
Within the scope of this work, we will propose an algorithm for just checking if the leaking corner issue affects the constructed TLT. However, there are a couple other approaches to treat this problem: (1) additional assumptions are made to ensure $\ltland$ results in an exact result, and (2) post-processing of the $\ltland$ operator's underlying set intersection to ensure result is exact or an under-approximation. In some cases, the first approach is viable and an assumption can be made (i.e. one child is the subset of the other child) where leaking corner problems is avoided. However, for cases when the leaking corner problem cannot be avoided, an important future work will be to develop and incorporate new, computationally-efficient approaches to resolve leaking corners for the second approach.
\end{remark}

\begin{lemma}\label{lemma:or}
\textbf{(Effect of $\ltlor$)} Let $\varphi = \varphi_1 \ltlor \varphi_2$, and $\Phi, \Phi_1, \Phi_2$ be their respective, true satisfaction sets. Then,
\begin{equation}
    \Phi \equiv \Phi_1 \cup \Phi_2
\end{equation}
\end{lemma}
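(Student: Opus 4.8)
The plan is to establish the set equality $\Phi \equiv \Phi_1 \cup \Phi_2$ by proving the two inclusions $\Phi_1 \cup \Phi_2 \subseteq \Phi$ and $\Phi \subseteq \Phi_1 \cup \Phi_2$ separately. Both directions will be driven by the membership condition of Definition~\ref{def:true_sat} together with the disjunction clause of the LTL semantics in Definition~\ref{def:semantics}, namely that $(\zeta(\cdot),t) \ltlsatisfy \varphi_1 \ltlor \varphi_2$ holds exactly when $(\zeta(\cdot),t) \ltlsatisfy \varphi_1$ or $(\zeta(\cdot),t) \ltlsatisfy \varphi_2$.

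For the inclusion $\Phi_1 \cup \Phi_2 \subseteq \Phi$, I would take an arbitrary $z \in \Phi_1$ and let $u(\cdot) \in \ctrlfuncspace$ be the control policy witnessing $z \in \Phi_1$ per Definition~\ref{def:true_sat}. Since the resulting trajectory satisfies $\varphi_1$, the disjunction clause immediately yields that the same trajectory under the same $u(\cdot)$ satisfies $\varphi_1 \ltlor \varphi_2$, so $z$ meets the membership condition for $\Phi$. As $\Phi$ is by definition the largest set whose points meet that condition, $z \in \Phi$. The case $z \in \Phi_2$ is symmetric, giving $\Phi_1 \cup \Phi_2 \subseteq \Phi$. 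This direction amounts to reusing the witnessing control and is straightforward.

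The reverse inclusion $\Phi \subseteq \Phi_1 \cup \Phi_2$ is where I expect the main obstacle. For $z \in \Phi$ there exists $u(\cdot)$ under which the trajectory satisfies $\varphi_1 \ltlor \varphi_2$; unfolding the disjunction clause, that trajectory satisfies $\varphi_1$ or satisfies $\varphi_2$. The crux is to distribute the existential quantification over control policies across this disjunction: from a single witnessing $u(\cdot)$ satisfying one of the two disjuncts, I would split into the corresponding case and reuse that same $u(\cdot)$ as the witness for membership in $\Phi_1$ (respectively $\Phi_2$), concluding $z \in \Phi_1 \cup \Phi_2$.

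The delicate point, and the reason this case differs from the $\ltland$ case of Lemma~\ref{lemma:and} where only one inclusion holds, is that existential quantification commutes with disjunction, $\exists u\,(P_1 \ltlor P_2) \Leftrightarrow (\exists u\, P_1) \ltlor (\exists u\, P_2)$, whereas it does not commute with the conjunction underlying $\ltland$; this is precisely why no ``leaking'' occurs for $\ltlor$. The care needed is to ensure the chosen disjunct is satisfied by the witness uniformly with respect to the reference time(s) used in Definition~\ref{def:true_sat}, so that the same $u(\cdot)$ genuinely certifies membership in a single one of $\Phi_1$ or $\Phi_2$, rather than the satisfied disjunct being allowed to switch along the horizon. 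Once this quantifier-distribution step is justified, combining the two inclusions yields $\Phi \equiv \Phi_1 \cup \Phi_2$.
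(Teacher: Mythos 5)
Your overall strategy --- double inclusion, with the easy direction $\Phi_1 \cup \Phi_2 \subseteq \Phi$ handled by reusing the witnessing control and the hard direction handled by distributing the existential over the disjunction --- is the same as the paper's, and your easy direction is complete and correct. The problem is the reverse inclusion $\Phi \subseteq \Phi_1 \cup \Phi_2$: you correctly identify the crux (the satisfied disjunct must not be allowed to switch along the horizon) but then leave it as ``once this quantifier-distribution step is justified,'' and that step is exactly where the argument breaks. Under Definition~\ref{def:true_sat}, $z \in \Phi$ gives a single $u(\cdot)$ such that \emph{for all} $t \in [0,T]$, $(\zeta(\cdot),t) \ltlsatisfy \varphi_1 \ltlor \varphi_2$; unfolding the semantics yields $\forall t\,\bigl(P_1(t) \ltlor P_2(t)\bigr)$, whereas membership in $\Phi_1 \cup \Phi_2$ requires $(\forall t\, P_1(t)) \ltlor (\forall t\, P_2(t))$. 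The commutation you invoke, $\exists u\,(P_1 \ltlor P_2) \Leftrightarrow (\exists u\,P_1) \ltlor (\exists u\,P_2)$, is valid but is not the step you need: the universal quantifier over $t$ sits \emph{inside} the existential and does not distribute over disjunction. Concretely, take $\varphi_1 = p$ and $\varphi_2 = q$ with disjoint regions $\mathcal L^{-1}(p)$ and $\mathcal L^{-1}(q)$: a trajectory that crosses from one region to the other stays in $\mathcal L^{-1}(p) \cup \mathcal L^{-1}(q)$ at every time (so its start state meets the membership condition for $\Phi$) without satisfying either $p$ or $q$ at all times, and there need be no alternative control that does, so that start state can lie outside $\Phi_1 \cup \Phi_2$.

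To be fair, the paper's own proof of this direction commits the same elision (it passes from the pointwise disjunction directly to ``thus, $z \in \Phi_1$ or $z \in \Phi_2$''), so you have reproduced the paper's argument and, to your credit, diagnosed its weak point more precisely than the paper does. The gap disappears if the satisfaction set is understood as evaluated only at the initial time $t = 0$ (the more standard convention for TLT root sets), since then the claim reduces to the genuine tautology $\exists u\,(A \ltlor B) \Leftrightarrow (\exists u\,A) \ltlor (\exists u\,B)$; but as written against Definition~\ref{def:true_sat}, the inclusion $\Phi \subseteq \Phi_1 \cup \Phi_2$ does not follow, and your proposal does not close that hole.
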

\smallskip
\begin{proof}
    Proof in Appendix~\ref{app:ctrl_exists}
\end{proof}
This means that there is no contribution of additional approximation from $\ltlor$ operators when taking the union of the underlying satisfaction sets. That said, there are requirements on the children of $\ltlor$, which are outlined in Algorithm~\ref{alg:approx}.

\begin{figure}[t]
    \centering
    \includegraphics[width=0.48\textwidth]{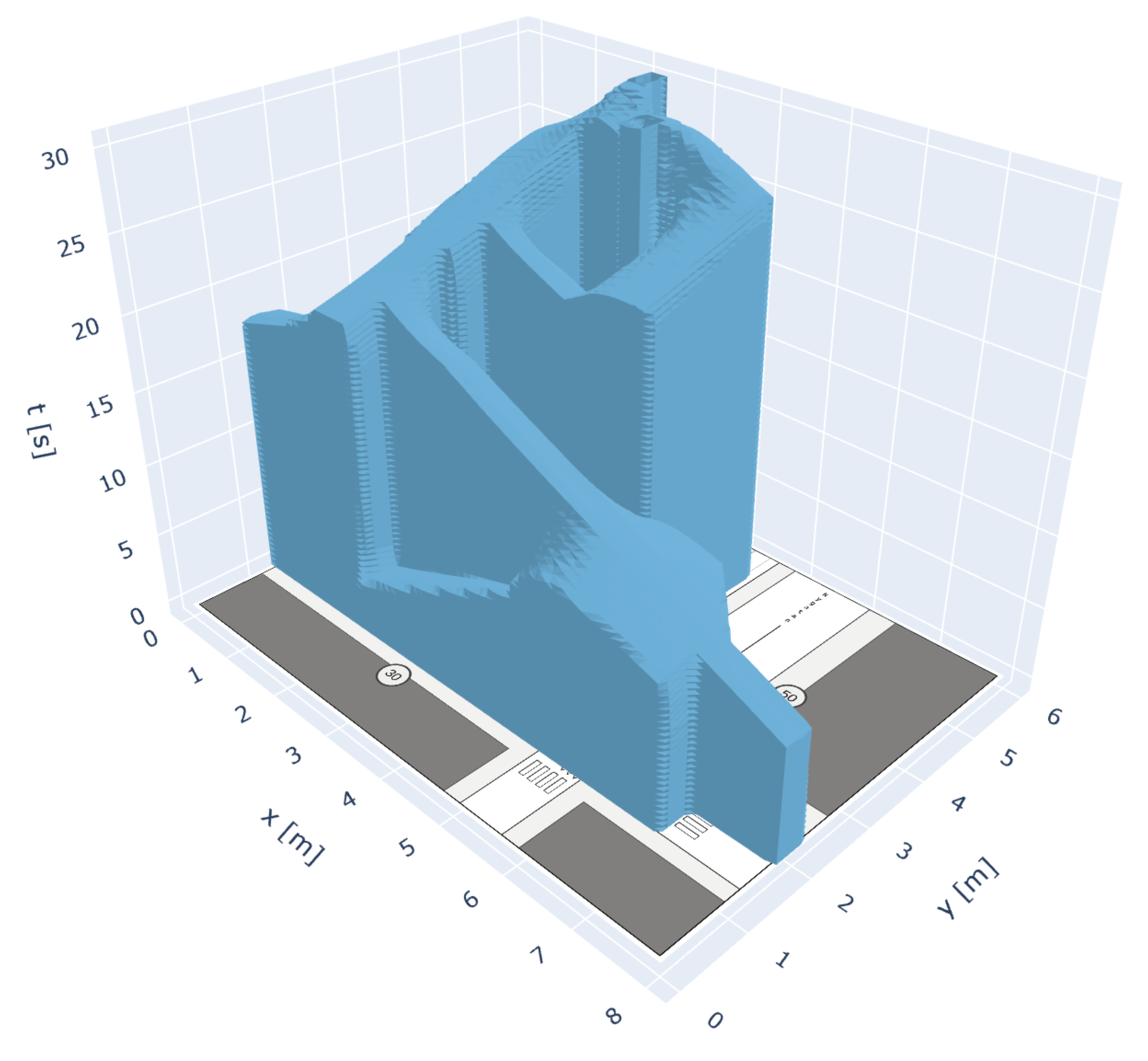}
    \caption{The full, under-approximating satisfaction set for the parking task computed by constructing the temporal logic tree using HJ reachability analysis}
    \label{fig:reach_tube}
\end{figure}

Now that we have characterized the individual effects of the $\{\ltlnot, \ltland, \ltlor\}$ operators on the existence of control policies, we present Algorithm~\ref{alg:approx}. In Algorithm~\ref{alg:approx}, we recurse through the TLT and consider the numerical approximations underlying $\{\ltluntil, \ltlalways\}$ and the approximation effects of $\{\ltlnot, \ltland, \ltlor\}$ to decide whether control policies exist the satisfy the full TLT. For the output of the algorithm, we enumerate the variables \texttt{E} $= 0$, \texttt{O} $= +1$, \texttt{U} $= -1$, and \texttt{I} $=$ \texttt{NaN} corresponding to exact, over-, under-approximation, and invalid, respectively.

\begin{theorem}\label{thm:approx}
    Let $\varphi$ be an LTL formula and $\mathfrak R$ be the non-empty root node of a constructed TLT $\mathfrak T$ for verifying $\varphi$. If \texttt{ctrlExists}($\mathfrak R$) returns \texttt{U}, then there exists control policies that satisfy $\mathfrak T$.
\end{theorem}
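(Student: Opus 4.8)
The plan is to prove the theorem by structural induction on the TLT, establishing the invariant that the label returned by \texttt{ctrlExists} correctly certifies the approximation direction of each node's state set relative to the true satisfaction set of the subformula it represents. Formally, I would attach to each state-set node $N$ the true satisfaction set $\Phi_N$ of its associated subformula (in the sense of Definition~\ref{def:true_sat}) and the computed set $\hat\Phi_N$ stored at that node, and show that: if \texttt{ctrlExists}($N$) returns \texttt{U} then $\hat\Phi_N \subseteq \Phi_N$; if it returns \texttt{O} then $\hat\Phi_N \supseteq \Phi_N$; and if it returns \texttt{E} then $\hat\Phi_N = \Phi_N$. The theorem then follows immediately: since \texttt{ctrlExists}($\mathfrak R$) $=$ \texttt{U} gives $\hat\Phi_{\mathfrak R} \subseteq \Phi$, and $\Phi$ is by Definition~\ref{def:true_sat} the largest set from which a satisfying control policy exists, every state in the non-empty root set admits some $u(\cdot) \in \ctrlfuncspace$ that satisfies $\varphi$, i.e.\ that satisfies $\mathfrak T$.

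For the base case, a leaf node's label is returned directly by \texttt{approxDirection}, which by construction reports whether that leaf's reachability computation was carried out as an over-, under-, or exact approximation, so the invariant holds by definition. For the inductive step I would proceed operator by operator, assuming the invariant for the grandchild(ren) and verifying it for the parent state-set node. The $\ltlnot$ case is discharged by Lemma~\ref{lemma:not}, which shows complementation flips an over-approximation into an under-approximation and vice versa, matching the algorithm's multiplication by $-1$ and its fixing of \texttt{E}. The $\ltlor$ case uses Lemma~\ref{lemma:or}: since the union is exact, a union of two sets approximating in the same direction approximates in that direction, a union with an exact set inherits the other operand's direction, and the only conflicting pair (\texttt{O} with \texttt{U}) is correctly sent to \texttt{I}. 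The $\ltland$ case uses Lemma~\ref{lemma:and}: because $\Phi \subseteq \Phi_1 \cap \Phi_2$ with possibly strict inclusion (the leaking corner), intersecting sets that are exact or over-approximating yields an over-approximation, so the algorithm returns \texttt{O} even from two exact inputs; when either operand is \texttt{U}, the intersection need not relate to $\Phi$ in any fixed direction, so returning \texttt{I} is the sound conservative choice.

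The step I expect to be the main obstacle is the $\{\ltluntil,\ltlalways\}$ case, where the label must combine the approximation direction of the child state set with the independent numerical approximation direction of the underlying HJ reachability (or RCIS) solve reported by \texttt{approxDirection}($c$). I would argue that the reachability operator $\mathcal R(\cdot)$ and the RCIS operator are monotone in their set arguments, so that feeding an under-approximating target/constraint into a numerically under-approximated solve yields an under-approximation of the true tube, and symmetrically for over-approximation; when the child direction and the solver direction disagree, the two effects push in opposing directions and no guarantee survives, which is exactly why the algorithm returns \texttt{I} whenever $a_o \neq a$. Establishing this monotonicity cleanly, and confirming that the single-grandchild treatment of $\ltluntil$ faithfully accounts for both the target set $\Phi_2$ and the constraint set $\Phi_1$ entering $\mathcal R(\Phi_2;\Phi_1)$, is the delicate part; once monotonicity of $\mathcal R$ and the RCIS operator in the appropriate direction is in hand, the remaining cases reduce to bookkeeping over the sign enumeration \texttt{E}$=0$, \texttt{O}$=+1$, \texttt{U}$=-1$.
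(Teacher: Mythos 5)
Your proposal takes essentially the same route as the paper: both argue by recursion/induction over the TLT, using Lemma~\ref{lemma:not}, Lemma~\ref{lemma:and}, and Lemma~\ref{lemma:or} for the $\ltlnot$, $\ltland$, $\ltlor$ cases and the numerical approximation direction of the reachability solve for $\ltluntil$ and $\ltlalways$, so that the label \texttt{U} at the root certifies that the root set under-approximates the true satisfaction set and hence (being non-empty) admits satisfying control policies. If anything, you are more explicit than the paper about the invariant being carried through the induction and about the monotonicity of $\reach(\cdot;\cdot)$ and $\rci(\cdot)$ in their set arguments needed in the $\ltluntil$/$\ltlalways$ case, which the paper's proof asserts without justification.
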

\begin{proof}
    Proof in Appendix~\ref{app:ctrl_exists}.
\end{proof}

\subsection{Synthesizing Least-Restrictive Control Sets}
Now, to synthesize the least-restrictive control set for full TLT, we follow~\cite[Algorithm 3]{Gao2021b}. For clarity, we include an adapted version in Algorithm~\ref{alg:ctrl}. We start by computing the individual least-restrictive control sets for $\ltluntil$ and $\ltlalways$ operator nodes (Line 11 and 23 in~\cite[Algorithm 4]{Gao2021b}). Using the value function $V_{\Phi}$ that from performing HJ reachability analysis, this is done by computing the following control set:
\begin{multline}\label{eq:hj_least_restrictive}
        \mathcal{U}_{\Phi}(z, t)=\{u\in \inputbound \; | \; D_t V_{\Phi}(z, -T+t)  + \\ D_z V_{\Phi}(z, -T+t) ^\top (f(z) + g(z) u)\leq 0\}.
\end{multline}
For $0\le t \le T$, let the computational time step be $\delta t$, and  $s=-T+t$. The value function can be modified with first-order Taylor expansion:
\begin{multline}
V_{\Phi}(\zeta(s+\delta t; z, s), s+\delta t) = V_\Phi(z, s)+\\
D_tV_\Phi(z,s)\delta t +D_z V_\Phi(z,s)^\top (f(z)+g(z)u)\delta t \le 0.
\end{multline}
The above equation implies that the satisfaction of $\Phi$ induces the least restrictive control set as the half-space:
\begin{equation}\label{eq:lrctrlbound}
a + b^\top u \le 0
\end{equation}
with
\begin{align*}
&a=V_\Phi(z,s)+D_tV_\Phi(z,s)\delta t+D_zV_\Phi(z,s)^\top f(z) \delta t \\
&b^\top =D_zV_\Phi(z,s)^\top g(z)\delta t
\end{align*}
This results in a least-restrictive control set that is an additional control constraint on top of any original control constraints, such as control bounds. Readers can find a code snippet for computing this control set using the odp toolbox~\cite{Bui2022} in Appendix~\ref{app:lcrs_comp}.

\begin{figure*}
    \centering
    \includegraphics[width=0.98\textwidth]{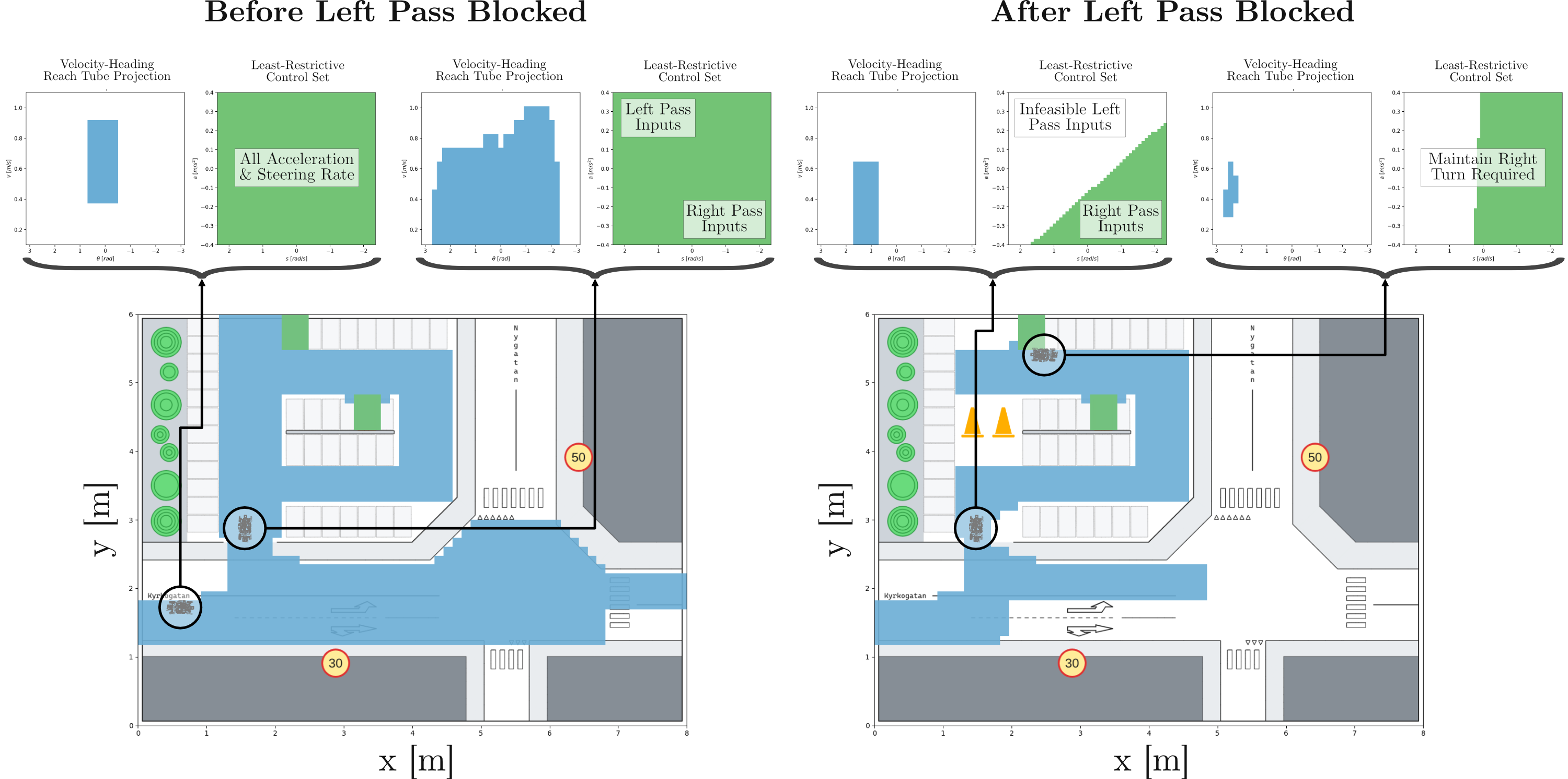}
    \caption{We illustrate the deployment of the simulated vehicle for the specified parking task. In the two larger plots, we show the full environment and an xy projection of the 5D satisfaction state set of the constructed TLT. In the smaller plots, we show the velocity-heading projection of the constructed TLT's satisfaction set (in blue) and the computed least-restrictive control set (in green) for different states and times throughout the task completion.}
    \label{fig:simulation}
\end{figure*}

\begin{proposition}\label{prop:temporal_fragment}
Satisfaction of $\Phi$ is guaranteed if $V_\Phi(z,s)\le0$. In this case, the state $z$ at time $s$ has a nonempty feasible control set. 
\end{proposition}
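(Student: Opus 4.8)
The plan is to derive both assertions from the half-space form \eqref{eq:lrctrlbound} of the least-restrictive control set together with the dynamic-programming structure of $V_\Phi$, treating nonemptiness first and the satisfaction guarantee second. For nonemptiness, I would observe that the feasible set $\{u\in\inputbound : a+b^\top u\le 0\}$ is nonempty if and only if $\min_{u\in\inputbound}(a+b^\top u)\le 0$. Substituting $a$ and $b$ from \eqref{eq:lrctrlbound} and factoring out $\delta t>0$ gives
\[
\min_{u\in\inputbound}(a+b^\top u)=V_\Phi(z,s)+\delta t\Big[D_tV_\Phi(z,s)+\min_{u\in\inputbound}D_zV_\Phi(z,s)^\top\big(f(z)+g(z)u\big)\Big].
\]
The hypothesis $V_\Phi(z,s)\le 0$ controls the first term, while the bracketed term is exactly the min-Hamiltonian of the HJ PDE that $V_\Phi$ solves --- the reach PDE \eqref{eq:reach_pde}, or, for $\ltlalways$, the PDE obtained from \eqref{eq:avoid_pde} via the complement in \eqref{eq:rci}, which turns $\max_u$ into $\min_u$ --- and in either case the PDE forces this term to be $\le 0$ (it is one term of a maximum set equal to zero, respectively it equals zero). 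Hence the right-hand side is $\le 0$ and $\mathcal U_\Phi(z,t)\neq\emptyset$.

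For the satisfaction guarantee, I would track the closed-loop value $W(\tau)=V_\Phi(\zeta(\tau;z,t,u(\cdot)),-T+\tau)$ along any trajectory that selects $u(\tau)\in\mathcal U_\Phi(\zeta(\tau),\tau)$ at each instant. Its total time derivative is precisely the defining quantity of \eqref{eq:hj_least_restrictive}, so membership in the least-restrictive set forces $\dot W(\tau)\le 0$; together with the nonemptiness established above, which holds at every state of the sub-level set and supplies the viability needed to keep choosing such a $u(\tau)$, the function $W$ remains non-positive on $[t,T]$ whenever $W(t)=V_\Phi(z,s)\le 0$. I would then read $W(\tau)\le 0$ back through the sub-level-set characterizations \eqref{eq:brs_computation}/\eqref{eq:avoid_computation}: remaining in the zero sub-level set of $V_\Phi$ over the remaining horizon is, by the construction of $V_\Phi$ in Section~\ref{sec:method} and the semantics of Definition~\ref{def:semantics}, equivalent to the closed-loop trajectory satisfying $\Phi$, which is the claimed guarantee.

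The main obstacle is regularity: $V_\Phi$ is in general only a Lipschitz viscosity solution of \eqref{eq:reach_pde}/\eqref{eq:avoid_pde} and need not be differentiable at $(z,s)$, so the gradient manipulations above are strictly valid only at points of differentiability and otherwise require the viscosity sub/super-differential inequalities. The first-order Taylor step of size $\delta t$ in \eqref{eq:lrctrlbound} is exactly the device that sidesteps this, which is why I argue nonemptiness at the level of the half-space rather than the continuous-time Hamiltonian. A secondary point needing care is the viability step: the control set must stay nonempty all along the trajectory, not only at $\tau=t$; I would close this by observing that $\dot W\le 0$ keeps the trajectory inside the sub-level set on which the nonemptiness argument applies, so the selection of admissible controls propagates over the entire horizon.
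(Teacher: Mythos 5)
Your proof is correct and follows essentially the same route as the paper's: both rest on the observation that the HJ PDE forces the minimized Hamiltonian term to be non-positive, so that $V_\Phi$ can be kept non-increasing along the closed-loop trajectory, which simultaneously yields nonemptiness of the half-space control set and invariance of the zero sub-level set. You spell out the discrete-time half-space computation from \eqref{eq:lrctrlbound}, the viability loop, and the viscosity-solution caveat more explicitly than the paper, which simply appeals to the standard reachability literature for these points.
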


\begin{proof}
    This is a well-known result (for example, see \cite{Fisac_Chen_Tomlin_Sastry_2015}), and in the ``reach" case follows immediately from Eq. \eqref{eq:reach_pde}.
    The first argument of the $\max$ operator, $D_{t'} V_{\mathcal R}(z, t') + \min_{u\in\inputbound}D_z V_{\mathcal R}(z, t') \cdot f(z, u)$, is the total time derivative of $V_{\mathcal R}$ when the optimal control is applied. 
    %The second argument is non-positive if the constraint specified by $V_{\mathcal C}$ is satisfied.
    Thus, Eq. \eqref{eq:reach_pde} implies that $V_{\mathcal R}$ is non-increasing along optimal trajectories.
    It is also known that $V_{\mathcal R}$ exists and is unique \cite{Barron1989, Barron1990}, so by construction, if $V_{\mathcal R}(z,s)\le0$, there must exist a control to keep the value of $V_{\mathcal R}$ non-positive.
    This can also be shown in a similar argument involving $V_{\mathcal A}$ in the ``avoid case", following Eq. \eqref{eq:avoid_pde}.
    We conclude the proof by noting that $V_\Phi(z,s)$ is equal to either $V_{\mathcal R}$ or $V_{\mathcal A}$ depending on $\Phi$.
\end{proof}

Finally, we can show that the output of Algorithm~\ref{alg:ctrl} will result in the guaranteed completion of specified task.
\begin{theorem}\label{thm:ctrl}
    Let $\varphi$ be an LTL formula and $\mathfrak T$ be the constructed TLT that passes \texttt{ctrlExists}. If at $t=0$, $z(0)$ is in the root node's state set and $\forall t \in [0, T]$, system~\eqref{eq:system} implements $u(t) \in \mathcal U_{z(t), t}$, where $\mathcal U_{z(t), t}$ is the output of \texttt{leastRestrictiveCtrl}($z(t), t, \mathfrak T$), then system~\eqref{eq:system} is guaranteed to satisfy $\varphi$.
\end{theorem}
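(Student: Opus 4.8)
The plan is to prove Theorem~\ref{thm:ctrl} by combining the soundness of the satisfaction-set construction (Theorem~\ref{thm:approx}) with the forward-invariance property of the least-restrictive control set (Proposition~\ref{prop:temporal_fragment}), and then lifting both facts to the full formula by structural induction over the TLT. First I would invoke Theorem~\ref{thm:approx}: since $\mathfrak T$ passes \texttt{ctrlExists}, the call returns \texttt{U}, so the root node's state set under-approximates the true satisfaction set $\Phi$ of $\varphi$. Hence the hypothesis that $z(0)$ lies in the root set yields $z(0)\in\Phi$, which by Definition~\ref{def:true_sat} guarantees that at least one satisfying control policy exists from $z(0)$. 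This establishes feasibility and, together with Proposition~\ref{prop:temporal_fragment}, the non-emptiness of the control sets we restrict to along the trajectory.

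The core of the argument is a forward-invariance claim: as long as $u(t)\in\mathcal U_{z(t),t}$, the trajectory $\zeta(\cdot)$ stays inside every set that witnesses the satisfaction of $\varphi$. The mechanism is the half-space constraint of Eq.~\eqref{eq:lrctrlbound}: for each temporal node with value function $V_\Phi$, restricting $u(t)$ through Eq.~\eqref{eq:hj_least_restrictive} forces $D_t V_\Phi + D_z V_\Phi^\top(f+gu)\le 0$, so by the first-order expansion $V_\Phi$ is non-increasing along the closed-loop trajectory. Since $V_\Phi(z(0))\le 0$ by membership in the corresponding set, Proposition~\ref{prop:temporal_fragment} keeps $V_\Phi(z(t))\le 0$, i.e. $z(t)$ remains in the satisfaction set for all $t$. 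For a $\ltluntil$ node this reproduces the reach-avoid semantics of staying in $\Phi_1$ until reaching $\Phi_2$, matching Definition~\ref{def:semantics}, while for a $\ltlalways$ node it reproduces the RCI invariance via Eq.~\eqref{eq:avoid_pde}.

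I would then lift this from a single operator to the whole tree by induction on the TLT, using Lemmas~\ref{lemma:not},~\ref{lemma:and},~\ref{lemma:or} to match each Boolean node ($\ltlnot,\ltland,\ltlor$) to its set operation, and invoking the fact that \texttt{leastRestrictiveCtrl} composes the per-node control sets by back-tracking through the tree. The induction hypothesis is that each child's control set keeps the trajectory inside that child's satisfaction set and thereby enforces the child's sub-formula; the inductive step glues the children together through the operator node's set operation, and the base case (leaf state sets for atomic propositions or $\ltltrue$) is immediate from the labeling function and the unconstrained control set.

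The hard part will be the feasibility bookkeeping in the inductive step: the composed control set at a node is an intersection of the half-space constraints inherited from several temporal descendants, and I must argue this intersection is non-empty at every time step along the trajectory rather than merely at $t=0$. This is precisely where passing \texttt{ctrlExists} is essential — ruling out leaking corners via Lemma~\ref{lemma:and} guarantees the composed set is a genuine under-approximation, so every admitted point simultaneously satisfies all active sub-formulae, and Proposition~\ref{prop:temporal_fragment} then supplies a non-empty feasible control at each step. The most delicate piece is the interaction between the existential reach time $t_1$ of $\ltluntil$ nodes and the invariance constraints of any nested $\ltlalways$ descendants, where I must show the two requirements can be met by a common admissible input throughout $[0,T]$.
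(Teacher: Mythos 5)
Your proposal follows essentially the same route as the paper's own proof: you use the root node's under-approximation of the true satisfaction set to establish initial feasibility, Proposition~\ref{prop:temporal_fragment} to get non-empty least-restrictive control sets at each temporal node, and the fact that $\mathfrak T$ passes \texttt{ctrlExists} (no leaking corners) to argue the composed control sets remain non-empty as they are combined up the tree. If anything, you are more explicit than the paper about the delicate points --- the per-time-step non-emptiness of the intersected half-space constraints and the interaction between $\ltluntil$ reach times and nested $\ltlalways$ invariance --- which the paper's proof asserts rather than argues in detail.
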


\begin{proof}
    Since the system starts at time $t=0$ with $z(0)$ in $\mathfrak T$'s root node's state set, then we know the system initially satisfies $\varphi$ since the root node's state set under-approximates the true satisfaction set of $\varphi$. In Algorithm~\ref{alg:ctrl}, when the control tree is synthesized, each state set node in $\mathfrak T$ is replaced by least-restrictive control sets~\eqref{eq:hj_least_restrictive} or the union/intersection of control sets~\eqref{eq:hj_least_restrictive}. We know from Proposition~\ref{prop:temporal_fragment} that the individual least-restrictive control sets that are the parents of $\ltluntil$ and $\ltlalways$ operator nodes in the control tree are nonempty feasible control sets and ensure the system satisfies the corresponding LTL sub-formula. Since we know $\mathfrak T$ passes \texttt{ctrlExists}, we know that $\mathfrak T$ does not have leaking corners and the combined least-restrictive control sets will also be nonempty feasible control sets. Since \texttt{compressTree} and \texttt{setBacktrack} only apply unions to the least-restrictive control sets up $\mathfrak T$, the final output $\mathcal U_{z,t}$ of \texttt{setBacktrack} is also the nonempty feasible control set containing all of the control inputs the system can implement to satisfy $\varphi$ at state $z$ and time $t$.
\end{proof}

Now that we are able to check the existence of satisfying control policies and synthesize least-restrictive control sets for a specified task, we illustrate and evaluate our approach on a simulated driving task in the next section.

\section{Simulated Driving Example}\label{sec:ex}
% Experiments

In this section, we apply the presented approach to a simulated driving task where a vehicle is tasked to park into a parking lot on a road network inspired by the Kyrkogatan-Nygatan intersection in Eskilstuna, Sweden. For this task, the vehicle starts from somewhere in the road network nearby the parking lot and while it's entering the parking lot, it's task will be changed due to an unplanned blocking of one part of the parking lot. With this example, we illustrate the resultant least-restrictive control sets that are computed during the deployment to show case how the sets correctly constrain the vehicle's acceleration and steering rates to guarantee the parking task is completed. For this example, we utilize the newly released pyspect toolbox, where the code for the example itself can also be found.

\subsection{Nonlinear, 5-state Bicycle Model}
For the vehicle, let $z = [x, y, \theta, \delta, v]^\top$ be the state, where $x$, $y$, $\theta$, $\delta$, and $v$ are the vehicle's x-position, y-position, heading angle, steering angle, and velocity, respectively. Then, let $u = [s, a]^\top$ be the input, where $s$ and $a$ are the steering rate and acceleration inputs into the vehicle, respectively. Explicitly, we write the dynamics as the following:
\begin{equation}
    f(z) =
    \left[\begin{matrix}
        v \cos\theta \\
        v \sin\theta \\
        \frac{v \tan \delta}{L}\\
        0 \\
        0
    \end{matrix}\right], \ 
    g(z) = 
    \left[\begin{matrix}
        0 & 0\\
        0 & 0\\
        0 & 0\\
        1 & 0 \\
        0 & 1
    \end{matrix}\right],
    \nonumber
\end{equation}
where $L$ is the wheel-base length of the vehicle.

\subsection{Task Specification}
When constructing the task specification, we start by defining atomic propositions that describe this environment in terms of road geometries and other state constraints. For convenience, we denote state inequalities using propositions like $x_{<c}$ where $L'(x_{<c}) \equiv \{ z \in \statespace \mid x < c\}$. Then, we define sub-formulae such as $\textit{Kyrkog. Geometry}$, $\textit{P-Lot Geometry}$, etc. that describe the drivable space. We also embed any speed limits and heading constraints into these formulae, for instance,
\begin{equation*}\begin{split}
    \textit{Kyrkogatan} =\;&  \textit{Kyrkog. Geometry}  \\
        & \ltland (y_{< 3}      \ltlimply v_{> 0.4} \ltland v_{<1.0}) \\ 
        & \ltland (y_{\ge 3}    \ltlimply v_{> 0.3} \ltland v_{<0.6}) \\
        & \ltland (y_{< 1.8}    \ltlimply \theta_{> \frac{+\pi}{5}} \ltland \theta_{< \frac{-\pi}{5}}) \\
        & \ltland (y_{\ge 1.8}  \ltlimply \theta_{> \frac{+4\pi}{5}} \ltlor \theta_{< \frac{-4\pi}{5}}).
\end{split}\end{equation*}
For the exact definitions and values, we refer readers to the implementation available in pyspect.

The task is for the vehicle to park in one of the two empty parking spots. Namely, the target is $\textit{Empty Spots} = \textit{Parking Spot 1} \ltlor \textit{Parking Spot 2}$. In practice, picking between these can relate to which is closest or is most accessible. Furthermore, the sub-formulae 
\begin{equation*}\begin{split}
    \varphi_L &= (\textit{Kyrkogatan} \ltlor \textit{P-Lot Left}) \ltluntil \textit{Empty Spots} \quad \text{and} \\
    \varphi_R &= (\textit{Kyrkogatan} \ltlor \textit{P-Lot Right}) \ltluntil \textit{Empty Spots}
\end{split}\end{equation*}
allow the vehicle to go either left or right inside parking lot. With these, the final task specification is $\varphi = \varphi_L \ltlor \varphi_R$ and the corresponding TLT is shown in Fig.~\ref{fig:tlt_example}.

\subsection{Results}

The TLT is evaluated with a time horizon $T = 30\text{ seconds}$. In Fig.~\ref{fig:reach_tube}, we visualize the time evolution of the satisfaction set along the xy-plane. Since $\varphi$ allows the vehicle to reach the empty spots via either left or right pass, we can see the tube filling the entirety of the parking lot's inside. The vehicle can enter the parking lot from both sides of Kyrkogatan and, as such, we can see the tube's ridge near the entrance splitting in both directions. Consider now that the left pass is blocked, see Fig.~\ref{fig:simulation}, which forces the vehicle to go via the right pass. The unplanned block results in the pruning of the TLT branch (shown in Fig.~\ref{fig:tlt_example} corresponding to $\varphi_L$. Then, the only remaining branch corresponds to $\varphi_R$, forcing the vehicle to switch to the right pass for the parking maneuver. Consequently, the backward reachable tube becomes much smaller and the system has stricter constraints on control inputs. Without needing to reconstruct the TLT, it is possible to react to the unplanned block by following the constraints imposed by computing least-restrictive control sets based on $\varphi_R$. Constructing the TLT in this simulation, with a 30 second time horizon, took 65.01 seconds on a system with an AMD Ryzen Threadripper 3970X and an NVIDIA GeForce RTX 2080 Ti while computing the least-restrictive control set took 18 milliseconds on average.

\section{Conclusion}\label{sec:conc}
% Conclusion

In this paper, we present an approach for guaranteeing a CPS completes a LTL task by synthesizing least-restrictive control sets from TLTs constructed from HJ reachability analysis. We are able to take advantage of the richness of LTL task specification together with the strong, low-level guarantees provided by HJ reachability analysis. We detail the key HJ reachability computations required to construct a TLT and provide an algorithm that verifies the existence of control policies that satisfy the constructed TLT. Then, once we know there exists control policies that satisfy the constructed TLT, we can efficiently compute non-empty, least-restrictive control sets that guarantee the CPS completes the specified task. Moreover, to implement and evaluate the approach, we develop a new Python toolbox for constructing and working with TLT. Using this toolbox, we showcase the efficacy of the approach on a simulated driving example where we visualize the evolution of the satisfaction sets and least-restrictive control sets throughout the driving task. Our future work includes both the development of computationally-efficient approaches to directly resolving leaking corner issues and deploying this method on real hardware.
\balance

%%%%%%%%%%%%%%%%%%%%%%%%%%%%%%%%%%%%%%%%%%%%%%%%%%%%%%%%%%%%%%%%%%%%%%%%%%%%%%%%

%%%%%%%%%%%%%%%%%%%%%%%%%%%%%%%%%%%%%%%%%%%%%%%%%%%%%%%%%%%%%%%%%%%%%%%%%%%%%%%%
\bibliographystyle{IEEEtran}
\bibliography{tlt_hj/tlt_hjb.bib}

% Generated by IEEEtran.bst, version: 1.14 (2015/08/26)
\begin{thebibliography}{10}
\providecommand{\url}[1]{#1}
\csname url@samestyle\endcsname
\providecommand{\newblock}{\relax}
\providecommand{\bibinfo}[2]{#2}
\providecommand{\BIBentrySTDinterwordspacing}{\spaceskip=0pt\relax}
\providecommand{\BIBentryALTinterwordstretchfactor}{4}
\providecommand{\BIBentryALTinterwordspacing}{\spaceskip=\fontdimen2\font plus
\BIBentryALTinterwordstretchfactor\fontdimen3\font minus \fontdimen4\font\relax}
\providecommand{\BIBforeignlanguage}[2]{{%
\expandafter\ifx\csname l@#1\endcsname\relax
\typeout{** WARNING: IEEEtran.bst: No hyphenation pattern has been}%
\typeout{** loaded for the language `#1'. Using the pattern for}%
\typeout{** the default language instead.}%
\else
\language=\csname l@#1\endcsname
\fi
#2}}
\providecommand{\BIBdecl}{\relax}
\BIBdecl

\bibitem{Wabersich2023}
K.~P. Wabersich, A.~J. Taylor, J.~J. Choi, K.~Sreenath, C.~J. Tomlin, A.~D. Ames, and M.~N. Zeilinger, ``{Data-Driven Safety Filters: Hamilton-Jacobi Reachability, Control Barrier Functions, and Predictive Methods for Uncertain Systems},'' \emph{IEEE Control Systems Magazine}, vol.~43, no.~5, pp. 137--177, 2023.

\bibitem{Mitchell2002}
I.~M. Mitchell, \emph{{Application of level set methods to control and reachability problems in continuous and hybrid systems}}.\hskip 1em plus 0.5em minus 0.4em\relax stanford university, 2002.

\bibitem{Bansal_Chen_Herbert_Tomlin_2017}
\BIBentryALTinterwordspacing
S.~Bansal, M.~Chen, S.~Herbert, and C.~J. Tomlin, ``{Hamilton-Jacobi Reachability: A Brief Overview and Recent Advances},'' no. arXiv:1709.07523, sep 2017. [Online]. Available: \url{http://arxiv.org/abs/1709.07523}
\BIBentrySTDinterwordspacing

\bibitem{Ames2016}
A.~D. Ames, X.~Xu, J.~W. Grizzle, and P.~Tabuada, ``{Control barrier function based quadratic programs for safety critical systems},'' \emph{IEEE Transactions on Automatic Control}, vol.~62, no.~8, pp. 3861--3876, 2016.

\bibitem{Choi2021}
J.~J. Choi, D.~Lee, K.~Sreenath, C.~J. Tomlin, and S.~L. Herbert, ``{Robust control barrier--value functions for safety-critical control},'' in \emph{2021 60th IEEE Conference on Decision and Control (CDC)}.\hskip 1em plus 0.5em minus 0.4em\relax IEEE, 2021, pp. 6814--6821.

\bibitem{Althoff2015}
M.~Althoff, ``{An introduction to {CORA} 2015},'' in \emph{Proc. of the Workshop on Applied Verification for Continuous and Hybrid Systems}, 2015.

\bibitem{Kochdumper2021}
N.~Kochdumper, F.~Gruber, B.~Sch{\"{u}}rmann, V.~Ga{\ss}mann, M.~Klischat, and M.~Althoff, ``{AROC: A Toolbox for Automated Reachset Optimal Controller Synthesis},'' in \emph{Proc. of the 24th International Conference on Hybrid Systems: Computation and Control}, 2021.

\bibitem{Yordanov2011}
B.~Yordanov, J.~Tumova, I.~Cerna, J.~Barnat, and C.~Belta, ``{Temporal logic control of discrete-time piecewise affine systems},'' \emph{IEEE Transactions on Automatic Control}, vol.~57, no.~6, pp. 1491--1504, 2011.

\bibitem{Karaman2008}
S.~Karaman, R.~G. Sanfelice, and E.~Frazzoli, ``{Optimal control of mixed logical dynamical systems with linear temporal logic specifications},'' in \emph{2008 47th IEEE Conference on Decision and Control}.\hskip 1em plus 0.5em minus 0.4em\relax IEEE, 2008, pp. 2117--2122.

\bibitem{Wongpiromsarn2012}
T.~Wongpiromsarn, U.~Topcu, and R.~M. Murray, ``{Receding Horizon Temporal Logic Planning},'' \emph{IEEE Transactions on Automatic Control}, vol.~57, no.~11, pp. 2817--2830, 2012.

\bibitem{Ulusoy2014}
A.~Ulusoy and C.~Belta, ``{Receding horizon temporal logic control in dynamic environments},'' \emph{The International Journal of Robotics Research}, vol.~33, no.~12, pp. 1593--1607, 2014.

\bibitem{Belta2017}
C.~Belta, B.~Yordanov, and E.~A. Gol, \emph{{Formal Methods for Discrete-Time Dynamical Systems}}.\hskip 1em plus 0.5em minus 0.4em\relax Springer, 2017.

\bibitem{Gao2021b}
Y.~Gao, A.~Abate, F.~J. Jiang, M.~Giacobbe, L.~Xie, and K.~H. Johansson, ``{Temporal Logic Trees for Model Checking and Control Synthesis of Uncertain Discrete-time Systems},'' \emph{IEEE Transactions on Automatic Control}, p.~1, 2021.

\bibitem{Chen2018}
M.~Chen, Q.~Tam, S.~C. Livingston, and M.~Pavone, ``{Signal temporal logic meets reachability: Connections and applications},'' in \emph{International Workshop on the Algorithmic Foundations of Robotics}.\hskip 1em plus 0.5em minus 0.4em\relax Springer, 2018, pp. 581--601.

\bibitem{Lindemann2018}
L.~Lindemann and D.~V. Dimarogonas, ``{Control barrier functions for signal temporal logic tasks},'' \emph{IEEE control systems letters}, vol.~3, no.~1, pp. 96--101, 2018.

\bibitem{Pian2023}
P.~Yu, Y.~Gao, F.~J. Jiang, K.~H. Johansson, and D.~V. Dimarogonas, ``{Online Control Synthesis for Uncertain Systems under Signal Temporal Logic Specifications},'' \emph{arXiv e-prints}, p. arXiv:2103.09091, mar 2023.

\bibitem{Jiang2020a}
F.~J. Jiang, Y.~Gao, L.~Xie, and K.~H. Johansson, ``{Ensuring safety for vehicle parking tasks using Hamilton-Jacobi reachability analysis},'' in \emph{2020 59th IEEE Conference on Decision and Control (CDC)}, 2020, pp. 1416--1421.

\bibitem{Jiang2020b}
------, ``{Human-Centered Design for Safe Teleoperation of Connected Vehicles},'' in \emph{IFAC Conference on Cyber-Physical \& Human-Systems}, Shanghai, China, 2020.

\bibitem{he2023efficient}
C.~He, Z.~Gong, M.~Chen, and S.~Herbert, ``{Efficient and Guaranteed Hamilton-Jacobi Reachability via Self-Contained Subsystem Decomposition and Admissible Control Sets},'' \emph{IEEE Control Systems Letters}, 2023.

\bibitem{Giacomo2013}
G.~{De Giacomo} and M.~Y. Vardi, ``{Linear temporal logic and linear dynamic logic on finite traces},'' in \emph{IJCAI'13 Proceedings of the Twenty-Third international joint conference on Artificial Intelligence}.\hskip 1em plus 0.5em minus 0.4em\relax Association for Computing Machinery, 2013, pp. 854--860.

\bibitem{Yu2023}
P.~Yu, Y.~Gao, F.~J. Jiang, K.~H. Johansson, and D.~V. Dimarogonas, ``{Online control synthesis for uncertain systems under signal temporal logic specifications},'' \emph{The International Journal of Robotics Research}, 2023.

\bibitem{Chen2018a}
M.~Chen, S.~L. Herbert, M.~S. Vashishtha, S.~Bansal, and C.~J. Tomlin, ``{Decomposition of Reachable Sets and Tubes for a Class of Nonlinear Systems},'' \emph{IEEE Transactions on Automatic Control}, vol.~63, no.~11, pp. 3675--3688, 2018.

\bibitem{Lee2019}
D.~Lee, M.~Chen, and C.~J. Tomlin, ``{Removing Leaking Corners to Reduce Dimensionality in Hamilton-Jacobi Reachability},'' in \emph{2019 International Conference on Robotics and Automation (ICRA)}, 2019, pp. 9320--9326.

\bibitem{Bui2022}
\BIBentryALTinterwordspacing
M.~Bui, G.~Giovanis, M.~Chen, and A.~Shriraman, ``{OptimizedDP: An Efficient, User-friendly Library For Optimal Control and Dynamic Programming},'' no. arXiv:2204.05520, apr 2022. [Online]. Available: \url{http://arxiv.org/abs/2204.05520}
\BIBentrySTDinterwordspacing

\bibitem{Fisac_Chen_Tomlin_Sastry_2015}
\BIBentryALTinterwordspacing
J.~F. Fisac, M.~Chen, C.~J. Tomlin, and S.~S. Sastry, ``{Reach-avoid problems with time-varying dynamics, targets and constraints},'' in \emph{Proceedings of the 18th International Conference on Hybrid Systems: Computation and Control}.\hskip 1em plus 0.5em minus 0.4em\relax Seattle Washington: ACM, apr 2015, pp. 11--20. [Online]. Available: \url{https://dl.acm.org/doi/10.1145/2728606.2728612}
\BIBentrySTDinterwordspacing

\bibitem{Barron1989}
E.~N. Barron and H.~Ishii, ``{The Bellman equation for minimizing the maximum cost},'' \emph{Nonlinear Analysis: Theory, Methods \& Applications}, vol.~13.

\bibitem{Barron1990}
E.~N. Barron, ``{Differential games maximum cost},'' \emph{Nonlinear Analysis: Theory, Methods \& Applications}, vol.~14, jun.

\bibitem{heterocl}
Y.-H. Lai, Y.~Chi, Y.~Hu, J.~Wang, C.~H. Yu, Y.~Zhou, J.~Cong, and Z.~Zhang, ``{HeteroCL: A Multi-Paradigm Programming Infrastructure for Software-Defined Reconfigurable Computing},'' \emph{Int'l Symp. on Field-Programmable Gate Arrays (FPGA)}, 2019.

\end{thebibliography}

%%%%%%%%%%%%%%%%%%%%%%%%%%%%%%%%%%%%%%%%%%%%%%%%%%%%%%%%%%%%%%%%%%%%%%%%%%%%%%%%

%%%%%%%%%%%%%%%%%%%%%%%%%%%%%%%%%%%%%%%%%%%%%%%%%%%%%%%%%%%%%%%%%%%%%%%%%%%%%%%%
\pagebreak

\appendices

\section{\texttt{ctrlExists} proofs}\label{app:ctrl_exists}

\begin{proof}
    \textbf{(Lemma~\ref{lemma:not})} Since $\Phi$ and $\Phi'$ are the largest sets satisfying $\varphi$ and $\ltlnot \varphi$, respectively, $\Phi' = \Phi^C$. If $\hat \Phi$ over-approximates $\Phi$, then $\hat \Phi$ contains states that are not in $\Phi$, thus, there must be states in $\Phi^C = \Phi'$ that are not in $\hat \Phi^C$. The same reasoning applies to when $\hat \Phi$ under-approximates $\Phi$.
\end{proof}

\begin{proof}
\textbf{(Lemma~\ref{lemma:and})}
We start by showing that for state $z$,
$$z\in \Phi \Rightarrow z \in \Phi_1 \cap \Phi_2.$$
By Definition~\ref{def:true_sat}, $\forall t \in [0, T]$, if $z\in \Phi$, then
$\exists u(\cdot),\ (\zeta(\cdot; z, 0, u(\cdot)), t) \ltlsatisfy \varphi = \varphi_1 \ltland \varphi_2.$
Thus, we know that $\forall t \in [0, T]$, $(\zeta(\cdot; z, 0, u(\cdot)), t) \ltlsatisfy \varphi_1$ and $(\zeta(\cdot; z, 0, u(\cdot)), t) \ltlsatisfy \varphi_2$. Thus, $z \in \Phi_1$ and $z \in \Phi_2$. Next, we show that
$$z \in \Phi_1 \cap \Phi_2 \not\Rightarrow z \in \Phi.$$
We know that the intersection of $\Phi_1$ and $\Phi_2$ corresponds to a collection of starting states $z$ where, $\forall t\in [0, T]$, $\exists u_1(\cdot)$ where $(\zeta_1(t; z, 0, u_1(\cdot)), t)$ individually satisfies $\varphi_1$ and $\exists u_2(\cdot)$ where $(\zeta_2(t; z, 0, u_2(\cdot)), t)$ individually satisfies $\varphi_2$. However, there can exist $z$ where $u_1(\cdot)$ and $u_2(\cdot)$ result in trajectories that satisfy one of $\varphi_1$ or $\varphi_2$, but not the other.
\end{proof}

\begin{proof}
\textbf{(Lemma~\ref{lemma:or})}
We start by showing that for state $z$, $$z \in \Phi \Rightarrow z \in \Phi_1 \cup \Phi_2 .$$
By Definition~\ref{def:true_sat}, $\forall t \in [0, T]$, if $z\in \Phi$, then
$\exists u(\cdot),\ (\zeta(t; z, 0, u(\cdot)), t) \ltlsatisfy \varphi = \varphi_1 \ltlor \varphi_2.$ Thus, we know that $\forall t \in [0, T]$, $\exists u(\cdot)$ where either $(\zeta(t; z, 0, u(\cdot)), t) \ltlsatisfy \varphi_1$ or $(\zeta(t; z, 0, u(\cdot)), t) \ltlsatisfy \varphi_2$. Thus, $z \in \Phi_1$ or $z \in \Phi_2$. Next, we show that 
$$z \in \Phi_1 \cup \Phi_2 \Rightarrow z \in \Phi.$$
The union of $\Phi_1$ and $\Phi_2$ corresponds to a collection of starting states $z$ where, $\forall t \in [0,T]$, $\exists u_1(\cdot)$ where $(\zeta_1(t; z, 0, u_1(\cdot)), t) \ltlsatisfy \varphi_1$ or $\exists u_2(\cdot)$ where $(\zeta_2(t; z, 0, u_2(\cdot)), t) \ltlsatisfy \varphi_2$. This corresponds to the full set of states that satisfy $\varphi_1 \ltlor \varphi_2$.
\end{proof}

\begin{proof}
\textbf{(Theorem~\ref{thm:approx})}
Since \texttt{ctrlExists} is a recursive algorithm, we start with the bottom of the recursion and prove that all encountered cases when returning up the recursion correctly determines the approximation direction:
    \begin{itemize}
        \item \textbf{Case: leaf node} (Line~\ref{ln:leaf}) This is the base case of the recursion. Leaf nodes are state sets that usually have approx. direction \texttt E, however they can sometimes also have \texttt O or \texttt U.
        \item \textbf{Case: parent is $\ltluntil$ (or $\ltlalways$)} (Line~\ref{ln:until_alw}) Child of $\ltluntil$ (or $\ltlalways$) can have approx. direction of \texttt E, \texttt O, or \texttt U. If child has approx. direction of \texttt E, reachability operator can be either $\oreach$ (or $\orci$) or $\ureach$ (or $\urci$). If child has approx. direction of \texttt O or \texttt U, then reachability operator needs to be corresponding $\oreach$ (or $\orci)$ or $\ureach$ (or $\urci$), respectively. Parent of $\ltluntil$ (or $\ltlalways$) will have approx. direction \texttt O if reachability operator is $\oreach$ (or $\orci$) and \texttt U if reachability operator is $\ureach$ (or $\urci$).
        \item \textbf{Case: parent is $\ltlnot$} (Line~\ref{ln:not}) Child of $\ltlnot$ can have an approx. direction of \texttt E, \texttt O, \texttt U. If child has an approx. direction of \texttt E, parent of $\ltlnot$ has an approx. direction of \texttt E. By Lemma~\ref{lemma:not}, if child has an approx. direction of \texttt O or \texttt U, parent of $\ltlnot$ has approx. direction of \texttt U or \texttt O, respectively. These three cases correspond to multiplying the approx. direction by $-1$.
        \item \textbf{Case: parent is $\ltland$} (Line~\ref{ln:and}) From Lemma~\ref{lemma:and}, we know that the set operation underlying $\ltland$ results in a possible over-approximation of the true satisfaction set. Thus, children of $\ltland$ can have approx. directions of either \texttt E or \texttt O. If either children have an approx. direction of \texttt U, then approx. direction of the parent of $\ltland$ is invalid. If children have approx. direction of either \texttt E or \texttt O, then the parent of $\ltland$ has an approx. direction of \texttt O.
        \item \textbf{Case: parent is $\ltlor$} (Line~\ref{ln:or}) From Lemma~\ref{lemma:or}, we know that the set operation underlying $\ltlor$ is an exact operation. Thus, children of $\ltlor$ can have approx. direction of \texttt E, \texttt U, or \texttt O. If both children have approx. direction of \texttt E, then parent of $\ltlor$ has an approx. direction of \texttt E. If one child has approx. direction of \texttt O or \texttt U, the parent of $\ltlor$ has the same approx. direction.
    \end{itemize}
\end{proof}

\section{Least-Restrictive Control Set Computation Using odp}\label{app:lcrs_comp}
The following is a code example of how to implement Eq.~\ref{eq:lrctrlbound} using OptimizedDp~\cite{Bui2022}, which automatically parallelizes Python code using the heterocl framework~\cite{heterocl}. 

\begin{lstlisting}[language=Python, numbers=left, escapechar=|]
'''
Known variables:
dynSys(System dynamics); grid(Computational grid); initial_value_f(0 sublevel set represents the target); tau(the whole time span); f() and g()(functions in Eq.13).
'''

from odp.solver import HJSolver, computeSpatDerivArray
from odp.compute_trajectory import spa_deriv

def least_restrictive_ctrl( yNext, state, spat_deriv, deltaT):
'''
yNext: V(z, t+deltaT) As the computation is going backward, the value at "t+delta t" is the previous time step of value at "t".
'''
    a = yNext + spat_deriv.T @ f(state) * deltaT
    bT = spat_deriv.T @ g(state) * deltaT
    return (a, bT)

# Numerically compute the value function VPhi
V = HJSolver(dynSys, grid, initial_value_f, tau, saveAllTimeSteps=True)
deltaT = tau[1] - tau[0]
y = V[..., t]
yNext = V[..., t+deltaT]

# Compute Spatial derivatives at every state
index = grid.get_index(dynSys.x)
spat_deriv = spa_deriv(index, y(index), grid)
a, bT = least_restrictive_ctrl(yNext(index), dynSys.x, spat_deriv, deltaT)
\end{lstlisting}
    
\end{document}